\newtheorem{defn}{Definition}[section]
\newtheorem{prop}[defn]{Proposition}
\newtheorem{eg}[defn]{Example}
\newcommand{\agents}{\mathcal{A}}
\newcommand{\items}{\mathcal{O}}
\newcommand{\mw}[1]{\textcolor{blue}{\textit{#1}}}
\newcommand{\jl}[1]{\textcolor{red}{\textit{#1}}}
\renewcommand{\emph}[1]{\textbf{\textit{#1}}}
\title{New algorithms for matching problems}
\author{Jacky Lo and Mark C. Wilson}
\begin{document}

\maketitle

\begin{abstract}
The standard two-sided and one-sided matching problems, and the closely related school choice problem, have been widely studied from an axiomatic viewpoint. A small number of algorithms dominate the literature. For two-sided matching, the Gale-Shapley algorithm; for one-sided matching, (random) Serial Dictatorship and Probabilistic Serial rule; for school choice, Gale-Shapley and the Boston mechanisms. 

The main reason for the dominance of these algorithms is their good (worst-case) axiomatic behaviour with respect to notions of efficiency and strategyproofness. However if we shift the focus to fairness, social welfare, tradeoffs between incompatible axioms, and average-case analysis, it is far from clear that these algorithms are optimal.

We investigate new algorithms several of which have not appeared (to our knowledge) in the literature before. We give a unified presentation in which algorithms for 2-sided matching yield 1-sided matching algorithms in a systematic way. In addition to axiomatic properties, we investigate agent welfare using both theoretical and computational approaches. We find that some of the new algorithms are worthy of consideration for certain applications. In particular, when considering welfare under truthful preferences, some of the new algorithms outperform the classic ones.

\end{abstract}



\section{Introduction}

The standard two-sided and one-sided matching problems, and the closely related school choice problem, have been widely studied from an axiomatic viewpoint. A small number of algorithms dominate the literature. For two-sided matching, the Gale-Shapley algorithm; for one-sided matching, (random) Serial Dictatorship and Probabilistic Serial rule; for school choice, Gale-Shapley and the Boston mechanisms. 

The main reason for the dominance of these algorithms is their good axiomatic behaviour with respect to notions of efficiency and strategyproofness. However if we shift the focus to fairness, social welfare, or tradeoffs between incompatible axioms, it is far from clear that these algorithms are optimal.

\subsection{Our contribution}
\label{s:contrib}

In Section~\ref{s:algo} we introduce  several (in our opinion) natural algorithms for one-sided matching, several of which have not appeared (to our knowledge) in the literature before. We give a consistent derivation using specializations of the Gale-Shapley algorithm \cite{} for two-sided matching, which includes the well-known algorithms Serial Dictatorship \cite{} and Naive Boston \cite{} in a unified framework. In Section~\ref{s:props}, in addition to axiomatic properties such as efficiency and strategyproofness, we investigate welfare loss using a computational approach. We find that under truthful preferences, some of the new algorithms clearly outperform the classic ones. In particular, we recommend some new algorithms for some applications.

\section{Definitions and terminology}

Let $\agents = \{a_1, \dots , a_n\}$ be a finite set of \emph{agents} and $\items = \{o_1, \dots, o_m\}$ 
a finite set of \emph{items}. 

In general, the number of items and agents may not be equal. We focus on the case $m=n$ in the present article. The more general case involves substantial complications: different ways of assigning preferences over subsets of $\items$ or $\agents$ lead to different notions of strategyproofness, for example. However all our algorithms can be modified trivially in order to work in the general case.

In the standard two-sided matching problem, each element of $\items$  has a complete strict preference order for elements of $\agents$, and vice versa, while for $1$-sided matching only the latter information is required. For school choice the order of preference of items over agents depends on the preferences of agents over items --- the two sides are definitely not independent (schools typically must admit students who are qualified as long as there is capacity, and use their own preferences only when a tie must be broken). We aim to unify these three cases, and restrict to the case of strict linear orders.

Let $L(\items)$ (respectively $L(\agents)$) denote the set of all strict linear orders on $\items$ (resp. agents).  A preference \emph{profile} is a pair of functions $(\pi_A, \pi_I)$ where  $\pi_A: A\to L(\items)$ and $\pi_I: \items \to L(\agents)$. A  \emph{matching} or \emph{discrete assignment} is a function $f:\items \leftrightarrow \agents$. Let $S$ be the set of all doubly stochastic $n\times n$ matrices with rows indexed by agents and columns by items.  A \emph{random assignment} is an element of $S$. The \emph{matching problem} is simply to output a matching given an input profile. The \emph{proportional} assignment is the random assignment in which each matrix entry equals $1/n$.
 
Of course, there are $n!$ discrete assignments and finding one is trivial. The point is to find one with desirable properties. A discrete assignment is \emph{efficient} if there is no other assignment which improves the outcome for some agent and does not worsen it for any agent. An algorithm for randomized assignment is \emph{ex-post efficient} if every matching occurring with positive probability is an efficient assignment. 

We now review some standard algorithms from the literature. 

Two algorithms for the same matching problem are \emph{equivalent} if they each produce the same output for every input. All the algorithms under study are \emph{anonymous}, meaning that a permutation of the players leads to the same permutation of the assignment. In other words, only the preferences matter, not the agents' identities. When discrete assignments are used, this means that whenever we have two agents with identical preferences, one must envy the other's assignment. A stronger condition is \emph{symmetry} (also called ``equal treatment of equals'') which says that agents with the same preferences receive the same assignment. Clearly, this can only be satisfied in the framework of random assignments.

Every algorithm that uses a fixed initial order of agents and produces a matching can yield an algorithm that produces a random matching, simply by randomizing over the initial order. This is usually done according to the uniform distribution, in order to preserve symmetry between agents. Thus every algorithm discussed in Section~\ref{s:algo} has a randomized version, which we denote by prefixing ``R" to its name.

\subsection{One-Sided Algorithms}
\label{ss:1-sided}

One-sided matching refers to the situation where the items' preferences over agents are ignored.

The most commonly discussed algorithm is Serial Dictatorship \cite{}.

\begin{eg} (Serial Dictatorship)
\label{eg:SD}

Fix an arbitrary linear ordering on $A$. The \emph{Serial Dictatorship} algorithm (SD) with respect to this ordering assigns items to agents as follows: at step $i$ , allocate to agent $i$th its most preferred item that has not already been allocated to a previous agent.
\end{eg}

The randomized version is denoted RSD, as mentioned above.
Serial Dictatorship satisfies important axiomatic properties such as \emph{ex-post efficiency} and \emph{strategyproofness} (all axiomatic properties are defined and discussed in Section~\ref{s:props}). 

The next algorithm was proposed by Bogolmanaia and Moulin \cite{BoMo2001}. It is \emph{ordinally efficient} in addition to being ex-post efficient, and has a weak strategyproofness property. It can be described using a cake-eating analogy.

\begin{eg} (Probabilistic Serial)
\label{eg:PS}
The \emph{Probabilistic Serial} rule is inherently randomized, and generates a random assignment as follows. We interpret an assignment of fraction $x$ of item $j$ to agent $i$ to mean that $i$ receives fraction $x$ of $j$ (in other words we pretend that the items are infinitely divisible). All agents simultaneously begin ``eating" at unit speed, each agent at each instant eating from its most preferred item among those that have not been completely consumed. On termination we have a random assignment.
\end{eg}

A closely related problem to one-sided matching is the \emph{housing market} problem \cite{ShSc1974}. The difference is that every agent is assumed to have an initial allocated item, and we seek a method for finding an allocation that is optimal in some way.

\begin{eg} (Top Trading Cycle)
\label{eg:GTTC}
When each agent is considered to be initially assigned an entire item, the agents may trade amongst themselves as follows. Each agent $i$ points to the agent currently owning the item on the top of $i$'s preference list. By finiteness and since every node has outdegree 1, this directed graph must contain a cycle. Reallocate items according to the arcs in the cycle, and remove these agents and items from further consideration. Repeat (using pointers to the next level preference if necessary) until no items/agents remain. This \emph{TTC algorithm} \cite{ShSc1974}, attributed by Shapley and Scarf to David Gale, always yields a discrete assignment that is efficient, and the mechanism is strategyproof and individually rational. Furthermore the algorithm runs in polynomial time.

\end{eg}

\begin{eg} (sample execution of TTC) 
Consider the profile where agents $1$ and $2$ have preferences  $a > b > c$ and agent $3$ has preference  $b > a > c$. The assignment $1:c, 2:b, 3:a$ is not ex-post efficient, because $2$ and $3$ can trade to their mutual benefit. In the first round of TTC, agents $1$ and $2$ points to agent $3$, while agent $3$ points to agent $2$. There is a cycle between agents $2$ and $3$. The agents swap along the cycle and are removed from consideration. Agent $1$ then points to itself in the next round, swaps along the cycle, remains with item $c$, and is removed from consideration. With no agents left, the TTC algorithm halts. The output is the ex-post efficient assignment $1:c, 2:a, 3:b$.
\end{eg}

When using TTC we have freedom in the choice of initial assignment. For example, choosing this uniformly at random and running TTC yields an algorithm equivalent to RSD \cite{AbSo1998}, and an adaptation of TTC to trade unit shares yields an algorithm equivalent to PS when run on the proportional endowment \cite{Kest2009}.

We find it useful to run TTC on the output of some of our algorithms in Section~\ref{s:algo} (algorithms which satisfy ex-post efficiency gain no benefit from running TTC, which terminates immediately because there is no cycle). The resulting combined algorithms, denoted XG where X is the name of the basic algorithm, are ex-post efficient and appear to have considerably better overall performance than the original algorithms.

\subsection{Two-sided algorithms}
\label{ss:2-sided}
In this case agents have complete strict preferences over items, and vice versa. The most well-known algorithm belongs to the class of \emph{deferred acceptance} algorithms. Items and agents are tentatively matched, but these ``engagements'' may be broken. In fact each item may attach to up to $n$ agents in the course of the algorithm. 

In the \emph{Gale-Shapley} algorithm \cite{GaSh1962}, agents in turn approach previously unapproached items that they prefer to their currently assigned item (every agent prefers each item to not having an item, and every item prefers every agent to not being held by an agent). If the currently proposing agent is preferable to the agent currently matched with the item, the item will reject its current partner for the proposing agent. No agent may approach an item that has already rejected it (such an approach would lead to another rejection by the above rules). This ensures termination after at most $n^2$ proposals.

\begin{eg}
\label{eg:GS}
This is adapted from Example 2 in \cite{GaSh1962}.

Suppose that the proposers' preferences are as follows:
\begin{align*}
1: \quad a>b>c>d \\
2: \quad a>d>c>b \\
3: \quad b>a>c>d \\
4: \quad d>b>c>a \\
\end{align*}

and the proposees' preferences are given by

\begin{align*}
a: \quad 4>3>1>2 \\
b: \quad 2>4>1>3 \\
c: \quad 4>1>2>3 \\ 
d: \quad 3>2>1>4 \\
\end{align*}

\if01
The sequences of proposals are as follows:

\begin{tabular}[!ht]{|c|c|c|}
\hline
	Proposal & Outcome & Current Partial Matching\\
\hline
    $a_1 =>$ a & tentatively matched & $a_1$:a\\
    $a_2 =>$ a & $a_2$ rejected & $a_1$:a\\
    $a_3 =>$ b & tentatively matched & $a_1$:a, $a_3$:b\\
    $a_4 =>$ d & tentatively matched & $a_1$:a, $a_3$:b, $a_4$:d\\
    $a_2 =>$ d & $a_4$ rejected & $a_1$:a, $a_2$:d, $a_3$:b\\
    $a_4 =>$ b & $a_3$ rejected & $a_1$:a, $a_2$:d, $a_4$:b\\
    $a_3 =>$ a & $a_1$ rejected & $a_2$:d, $a_3$:a, $a_4$:b\\
    $a_1 =>$ b & $a_1$ rejected & $a_2$:d, $a_3$:a, $a_4$:b\\
    $a_1 =>$ c & tentatively matched & $a_1$:c, $a_2$:d, $a_3$:a, $a_4$:b\\
\hline
\end{tabular}
\fi

The final matching is $1:c, 2:d, 3:a, 4:b$. There are 9 proposals made during the execution of the algorithm. 
\end{eg}

The Gale-Shapley algorithm has the well-known property that the output matching is \emph{stable}, meaning that there is no unmatched (agent, item) pair who each prefer each other to their current partner. Also, the output matching is optimal for proposers, meaning that each proposer receives the best possible item it can receive in a stable matching. It follows that the output of the algorithm does not depend on the order of proposals made by agents. Note that, by contrast, the output of the algorithms in Section~\ref{s:algo} will depend strongly on the order of proposals.

We can produce one-sided matching algorithms from two-sided ones by forcing the items to have specific (fictitious) preferences. We use this idea systematically in Section~\ref{s:algo}. 

\subsection{School choice algorithms}
\label{ss:school}

A case intermediate between 1-sided and 2-sided matching, which is important for later, is that of \emph{school choice}. The Gale-Shapley algorithm is applicable to the case where the number of agents exceeds the number of items, provided items (schools) have capacity for some number of agents (students). A school accepts a student's proposal provisionally provided there is capacity remaining, or the student is preferable to an already tentatively accepted student. The case where each school has capacity $1$ and the numbers of schools and students are equal is the case described in Section~\ref{ss:2-sided}.

There are other algorithms for school choice that use \emph{immediate} acceptance. In this case, each student first applies to her first choice school. Each school ranks applicants and chooses as many as it can, subject to capacity. Students not accepted already then apply to their second choice school, etc. This description implicitly uses simultaneous proposing by all unmatched agents and is called the \emph{Boston mechanism} \cite{AbSo2003, MeSe2014}. There is also a sequential version in which proposals are made one agent at a time. In that case, the order of proposals clearly changes the final allocation, since no engagement can ever be broken.

The special case of the Boston mechanism in which each school has capacity $1$ and there are equal numbers of agents and items is a 2-sided matching algorithm as defined above. For the same input as in Example~\ref{eg:GS}, the final allocation using sequential offers by $1,2,3,4$ in that order is $1:a, 2:d, 3:b, 4:c$. The final allocation using simultaneous offers is $1:a, 2:c, 3:b, 4:d$.

\section{New algorithms}
\label{s:algo}

For the rest of the analysis, we construct one-sided matching algorithms by relaxing 2-sided algorithms. The Gale-Shapley algorithm generates a stable matching using a series of proposals, based on fixed preferences of the agents and items. Given the agent preferences over items as input to a 1-sided matching problem, we construct fictitious preferences for the items over the agents. For example, we can assume that all items have a fixed common preference.

\begin{prop} SD is equivalent to a special case of GS.
\label{prop:GS yields SD}
\end{prop}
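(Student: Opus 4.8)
The plan is to realize SD as the instance of GS in which every item is given the same fictitious preference order over agents, namely the order $a_1 \succ a_2 \succ \cdots \succ a_n$ that SD uses to sequence the agents (so the first ``dictator'' is the agent ranked highest by every item). Fix such an SD ordering and the corresponding GS instance; the claim is that the GS output matching coincides with the SD output matching.

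First I would record an elementary monotonicity property of this particular GS instance. Since the GS outcome is independent of the order in which proposals are made (the proposer-optimality fact noted in Section~\ref{ss:2-sided}), I may schedule the proposals in the SD agent order: let $a_1$ propose until it is provisionally matched, then $a_2$, and so on. Because every item ranks $a_i$ above $a_j$ whenever $i < j$, an item currently holding $a_j$ rejects any later proposal coming from an agent $a_k$ with $k > j$, while it accepts any proposal whenever it holds no one (items prefer every agent to being unheld). Consequently, by the time $a_k$'s turn arrives, the agents $a_1, \dots, a_{k-1}$ already occupy exactly the items they will hold at termination and are never displaced thereafter, and once $a_k$ becomes matched it is never displaced by any of $a_{k+1}, \dots, a_n$.

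Next I would identify what $a_k$ receives on its turn. Agent $a_k$ approaches items in its own preference order, skipping those that have already rejected it; each item it approaches is either held by some $a_j$ with $j < k$ (and rejects it) or is currently unheld (and accepts it). Hence $a_k$ is matched to its most preferred item among those not held by $a_1, \dots, a_{k-1}$. An induction on $k$ then finishes the argument: at step $k$ of SD the already-allocated items are exactly those taken by $a_1, \dots, a_{k-1}$, which by the inductive hypothesis are the same items those agents hold in the GS run, and in both processes $a_k$ takes its favourite item among the remainder. Therefore the two matchings agree.

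The only genuine subtlety is the appeal to order-independence of the GS outcome, which is what licenses the convenient SD-ordered proposal schedule; the rest is bookkeeping. One could even dispense with that appeal by checking directly that, when all items share the order $a_1 \succ \cdots \succ a_n$, the SD allocation is the \emph{unique} stable matching: any blocking pair $(a_i,o)$ would force $o$ to be held by some $a_j$ with $j>i$, contradicting the fact that $a_i$ passed over $o$ at step $i$ of SD; since GS always returns a stable matching, it must return this one.
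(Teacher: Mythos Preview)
Your proof is correct and very close in spirit to the paper's: both fix the common item preference $a_1 \succ \cdots \succ a_n$ and argue by induction on $k$ that agent $a_k$ ends up with its top remaining item. The difference is in the tool used for the inductive step. The paper invokes \emph{stability} of the GS output directly: any item $a_i$ prefers to its allocation must be held by some $a_j$ with $j<i$, so once $a_1,\dots,a_{i-1}$ are shown to hold their SD items, $a_i$ must receive its SD item too. You instead invoke \emph{order-independence} of GS to schedule proposals in the SD order and then track the run operationally, which makes the ``$a_k$ is never displaced by $a_{k+1},\dots,a_n$'' step explicit. Both are equally valid; the paper's route is marginally shorter, while yours is more self-contained about what happens during execution. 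Your closing observation---that the SD allocation is the unique stable matching for this instance, so GS must return it---is a genuinely cleaner alternative that sidesteps the induction entirely and is not in the paper.
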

\begin{proof}
Suppose that all items have the same preference order over agents, which without loss of generality we write $1>2>\dots>n$. We claim that Gale-Shapley will output an assignment that is the same as the output of Serial Dictatorship with the agent order $1, 2, \dots, n$.

The proof is inductive. The base case is that agent $1$ will get his first choice with GS. It is trivially true as agent $1$ will propose to its most preferred item, and since every item prefers agent $1$ to any other agent, they cannot be rejected later. Therefore agent $1$ will be allocated the same item under GS or SD.

Stability of GS implies that for every item that agent $i$ wants more than the item they are allocated, the item must be held by an agent ranked higher by the item. If every agent before $i$ gets its choice as per SD, agent $i$ will eventually propose to its choice under SD. As every agent after $i$ is ranked below $i$ by all items, they cannot cause that item to reject $i$. Therefore agent $i$ will have the same item under GS or SD.
\end{proof}

We can do the same thing with the (simultaneous) Naive Boston algorithm. Given an instance of 1-sided matching, we create fictitious preferences in which each item has the same preference, say $1>2>\dots >n$. The resulting algorithm we call the $1$-sided Naive Boston algorithm. Note that we can also interpret this algorithm sequentially if we ensure that agent order is $1,2,\dots, n$, but otherwise the sequential and simultaneous forms will differ in general.

Below, we generalize this fictitious preference approach by allowing each item to build its fictitious preferences dynamically, using some fixed rule, based only on the sequence of proposals that it receives from agents. Recall that any order of proposals gives the same result for the Gale-Shapley algorithm with fixed preferences, but as we see below, this is not the case in our relaxed setup. 

\begin{defn}Throughout the rest of this article, for the purposes of illustration and comparison between algorithms we use what we call the \emph{standard profile} in which agents $1, 2$ and $3$ have preferences $a > b > c > d$, and agent $4$ has preferences $b > a > c > d$. 
\end{defn}

\subsection{The ``permanent memory" case}
\label{ss:memory}

As the order of the agents' proposals affects the items' preferences, the order of proposals affects the final allocation. This has two consequences. The first is that the treatment of rejected agents matters. After an agent is rejected, either because the item prefers its current agent or breaks its tentative engagement, it may not necessarily be the next agent to propose. We consider two possibilities:  a stack (rejected agents go to the top of the stack) or a queue (rejected agents go to the back of the queue).  The other consequence is, as noted above, that the initial order of agents has an impact on the final allocation. For the purpose of the analysis, the algorithms will fix an arbitrary initial order.

We consider two rules for building preferences dynamically. These are \emph{early-proposal preference} (or \emph{Accept-First}) and \emph{late-proposal preference} (or \emph{Accept-Last}). Accept-First means that the first agent to approach an item is accepted, and subsequent proposals are rejected. Using Accept-Last, an item always breaks an engagement in favour of a new proposer, if it has not yet been held by the new proposer. In terms of the marriage interpretation often used to describe the Gale-Shapley algorithm, for Accept-First algorithms the proposees stick faithfully to their first suitor, whereas for Accept-Last algorithms the proposees are always more satisfied with a new suitor than their current fianc\'{e}.

These two dichotomies (stack/queue, Accept-First/Accept-Last) when combined with the distinction between permanent and temporary memory (explained in Section~\ref{ss:no memory}) yield 8 algorithms. We denote them by three-letter abbreviations. For example, PFS refers to permanent memory, Accept-First, stack.

Sample executions on the standard profile of all 8 algorithms introduced below can be found in Appendix~\ref{apps:examples}. The results are summarized in Table~\ref{t:stdprof}.

We consider the Accept-First algorithms. The algorithm PFS is equivalent to Serial Dictatorship. Interestingly, switching the stack to a queue results in an algorithm that is equivalent to the Naive Boston algorithm, so we obtain no new inequivalent algorithms in this case, just a unified presentation of old ones. We give the details below.

\begin{prop} PFS with order of agents $1, 2, \dots, n$ is equivalent to Serial Dictatorship with the same order of agents.
\end{prop}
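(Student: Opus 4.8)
The plan is to show that the PFS algorithm, when run with agent order $1, 2, \dots, n$, makes exactly the same sequence of effective assignments as Serial Dictatorship, and in particular that no tentative engagement is ever broken. Recall the setup: in PFS we use permanent memory (an item remembers every agent that has ever proposed to it and will never accept a repeat proposer), Accept-First (the first agent to propose to an unclaimed item is accepted, and every later proposer is rejected), and a stack discipline for rejected agents (a rejected agent goes to the top of the stack, so it is the next to propose). Since Accept-First means an item never breaks an engagement once formed, the key observation is that PFS really does behave like a sequential greedy procedure: once an item is claimed, it is claimed forever, so the set of "available" items only shrinks.

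First I would set up the induction on $i$, the claim being: in the PFS run, agent $i$ is the $i$-th agent to successfully claim an item, and the item it claims is its favourite among the items not claimed by agents $1, \dots, i-1$ — which is exactly the item SD assigns to agent $i$. For the base case, agent $1$ proposes first (it is at the bottom of the initial stack, but nobody has gone yet, so actually it is the first to act) to its top choice; the item is unclaimed, so under Accept-First it accepts, and it is never relinquished. So agent $1$ gets its SD item.

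For the inductive step, I would argue that once agents $1, \dots, i-1$ have each claimed their SD items and these claims are permanent, the algorithm's attention turns to agent $i$. Here is where the stack discipline does real work: after agent $i-1$ has claimed its item, the next agent to propose is agent $i$ (it was placed on the stack by whatever rejection chain preceded it, or it is the next fresh agent in the initial order — I would need to track the stack carefully). Agent $i$ then proposes down its preference list; each time it hits an item already claimed by an earlier agent, Accept-First forces a rejection, agent $i$ goes back to the top of the stack, and since no other agent is above it, agent $i$ immediately proposes to its next choice. This repeats until agent $i$ reaches the first item on its list not claimed by $1, \dots, i-1$, which it then claims permanently. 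That item is precisely SD's assignment to agent $i$. Since the items claimed by $1, \dots, i-1$ are exactly the first $i-1$ SD assignments, agent $i$'s claim matches SD's $i$-th assignment, completing the induction.

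The main obstacle, and the step that needs the most care, is verifying the stack bookkeeping: one must confirm that the stack discipline genuinely makes the rejected agent $i$ the immediate next proposer at every point in its search, so that agents $i+1, \dots, n$ never get to act (and hence never claim any item) until agent $i$ has settled. The permanent-memory feature is what rules out an infinite loop — agent $i$ cannot re-propose to an item that already rejected it — but since under Accept-First agent $i$ is rejected by exactly the already-claimed items and those stay claimed, permanent memory is not strictly needed here; the argument goes through as long as agent $i$ keeps descending its own list. I would also note that this shows the output is independent of any interleaving, since the stack forces a strict "finish agent $i$ before starting agent $i+1$" order, exactly mirroring SD. One should also double-check the edge case where agent $i$'s entire preference search is triggered by its own rejections only (no other agent interrupts), which is immediate from the stack rule.
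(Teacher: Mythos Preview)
Your proof is correct, and in fact more carefully argued than the paper's. The paper takes a different, two-step route: it first asserts that PFS with agent order $1,\dots,n$ is a special case of Gale--Shapley with the common item preference $1>2>\cdots>n$ (justified by the remark that engagements are never broken under Accept-First with permanent memory), and then invokes the earlier Proposition (``SD is equivalent to a special case of GS'') to conclude. So the paper factors through GS and reuses a prior result, whereas you argue directly by induction on the agent index, tracking the stack explicitly.

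What your approach buys is transparency about why the stack discipline matters: you make explicit that a rejected agent $i$ returns to the top of a stack containing only $i,i{+}1,\dots,n$, so agent $i$ exhausts its own search before any later agent acts. The paper's proof leaves this implicit in the phrase ``engagements are never broken,'' which on its own does not obviously pin down the output as SD with \emph{this particular} agent order --- one still needs to know that the proposal order realised by the stack makes PFS a valid GS execution for the common preference $1>\cdots>n$. Your argument supplies exactly that missing bookkeeping. Conversely, the paper's route is shorter and ties PFS into the broader GS framework, which is thematically useful for the rest of the section. One small wrinkle: your parenthetical about agent $1$ being ``at the bottom of the initial stack'' is backwards (with order $1,\dots,n$ agent $1$ is on top), though you immediately recover the correct conclusion.
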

\begin{proof}
By definition PFS is a special case of GS in which there is a common preference order $1>2>\dots > n$ for items over agents, because engagements are never broken (by Accept-First and the permanent memory). By Proposition~\ref{prop:GS yields SD} the latter is equivalent to SD with the agent order $1,2, \dots, n$.
\end{proof}

\begin{prop} PFQ with order of agents $1,2,\dots ,n$ is equivalent to the $1$-sided Naive Boston algorithm where the common preference order of items is $1>2>\dots >n$.
\end{prop}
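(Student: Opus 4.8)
The plan is to run PFQ and the $1$-sided Naive Boston algorithm (which we take in its simultaneous form, every item using the fictitious preference $1>2>\dots >n$) side by side, organising the pops of PFQ's queue into \emph{passes} --- pass $1$ being the first $n$ pops, and pass $k+1$ the pops of the agents rejected during pass $k$ --- and matching pass $k$ of PFQ to round $k$ of Naive Boston. I would prove by induction on $k$ the joint invariant: (i) the set $R_k$ of agents making a proposal in pass $k$ of PFQ equals the set of agents proposing in round $k$ of Naive Boston; (ii) in pass $k$, PFQ pops the agents of $R_k$ in increasing index order, and each such agent then proposes to its $k$-th most preferred item; (iii) the partial matching in force at the start of pass $k$ is the same on both sides. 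The base case $k=1$ is immediate: $R_1=\agents$ on both sides, nothing is yet matched, PFQ's first pass is exactly popping $1,2,\dots ,n$ from the initial queue, and by Accept-First together with permanent memory no agent popped in pass $1$ has been rejected by anything, so it proposes to its top choice.

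For the inductive step, assume the invariant at level $k$. On the Naive Boston side, an item $o$ unmatched at the start of round $k$ ends the round matched to the lowest-indexed agent of $R_k$ whose $k$-th choice is $o$ (common preference $1>2>\dots >n$), while an item already matched before round $k$ stays as it was and rejects all round-$k$ proposers. On the PFQ side the outcome is identical: because PFQ pops $R_k$ in increasing index order, Accept-First makes the first proposal to a still-free item succeed, and permanent memory means that match is never undone --- so the lowest-indexed agent of $R_k$ pointing at a free $o$ claims it and every later proposer to $o$ within the pass, as well as everyone pointing at an already-matched item, is rejected. Hence both sides reach the same updated partial matching and the same set of rejected agents. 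Two things remain in order to inherit the invariant at level $k+1$. First, those rejected agents form $R_{k+1}$ in increasing index order: a rejected agent is pushed to the \emph{back} of the queue, which at that instant lies behind every not-yet-popped member of $R_k$, so no agent is popped twice in one pass and the queue after pass $k$ is precisely the rejected agents, listed in the order they were popped, i.e.\ by increasing index. Second, each such agent proposes in pass $k+1$ to its $(k+1)$-th choice: since $R_{k+1}\subseteq R_k\subseteq\dots\subseteq R_1$, by the invariant applied to passes $1,\dots ,k$ it has so far proposed to, and been rejected by, exactly its first $k$ choices, so (permanent memory forces it next to approach its most preferred item that has not rejected it) it now proposes to its $(k+1)$-th choice --- which is also what it proposes to in round $k+1$ of Naive Boston.

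Both processes terminate with a perfect matching on the $n$ items: PFQ as a special case of Gale-Shapley, halting exactly when the queue empties, which happens only once every agent is matched; Naive Boston by the standard fact that no agent survives more than $n$ rounds unmatched. The induction then forces equal numbers of passes and rounds and agreement at every stage, so the final matchings coincide. I expect the one genuinely delicate point --- and the reason PFQ reproduces the Boston mechanism rather than, say, Serial Dictatorship --- to be invariant (ii)'s claim that a PFQ agent rejected during pass $k$ is not popped again before the pass ends; this is exactly the effect of routing rejected agents to the back of the queue (with a stack it fails, and PFS indeed collapses to Serial Dictatorship). The remaining steps are routine bookkeeping, and the termination and perfect-matching facts I would not belabour.
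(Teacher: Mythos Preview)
Your proof is correct and follows essentially the same route as the paper's: organise the PFQ execution into passes (the paper calls them rounds), and show by induction on $k$ that pass $k$ has exactly the unmatched agents each proposing once to their $k$-th choice, which is precisely round $k$ of simultaneous Naive Boston. Your version is somewhat more careful than the paper's in making explicit the within-pass order invariant (agents are popped in increasing index, so the lowest-indexed claimant wins each free item, matching the common fictitious preference $1>2>\dots>n$), a point the paper leaves implicit.
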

\begin{proof}
It is useful to consider the above algorithm as occurring in rounds. Round $i$ ends precisely when all the remaining agents have proposed to their $i$th choice. We show by induction that during round $i$:
\begin{itemize}
\item the agents proposing are precisely those who have not been matched previously;
\item each such agent makes exactly one proposal, to its $i$th choice.
\end{itemize}
Since this is exactly the behaviour of the (simultaneous) 1-sided Boston algorithm and since assignments are never changed in either algorithm, the proposition follows.

When $i=1$, all agents have proposed to their first choice. Thus each agent has made exactly one proposal, because any agent that is accepted never makes another proposal, and any agent that is rejected must go to the back of the queue and wait until all other agents have made their first proposal. 

Assuming the result holds for all rounds before $i$, then all remaining agents have proposed to and been rejected by all items down to rank $i-1$. In round $i$ they must then propose to their $i$th choice. This happens exactly once because of the queue discipline.
\end{proof}

\begin{eg} (Accept-Last permanent memory algorithms versus accept-first)

For the standard profile, the final assignment using PFS is $1:a, 2:b, 3:c, 4:d$, while using PLS it is $1:d, 2:c, 3:a, 4:b$. If we change the stack to a queue, the final assignment is $1:a, 2:c, 3:d, 4:b$ for PFQ and $1:d, 2:c, 3:b, 4:a$ for PLQ. 
Note that with these preferences, some agent must receive its 4th choice and no more than two agents can receive their 1st choice. Only PFQ achieves the latter condition.
\end{eg}

Although the Accept-Last algorithms are not ex-post efficient, empirical results show that (among other things) their output, when used as an initial endowment for TTC, leads to better welfare performance than the ex-post efficient Accept-First algorithms. We discuss this in details in Section~\ref{s:props}. 

\subsection{The ``temporary memory" case}
\label{ss:no memory}

The previous algorithms assume that each item retains its preferences throughout the execution of the algorithms. Other interesting algorithms can be generated by relaxing that requirement. Whenever the item resets its memory, it make sense for the agents to propose to items that rejected them before. Any rules that uses temporary memory for items must ensure that the algorithm will halt with a matching. To ensure the algorithm halts, we only allow the item to reset its memory when the number of tentative matchings has increased. As items do not go from matched to unmatched, this happens precisely when a new item is matched.

Whenever an agent proposes to an unmatched item, all items, including the new item, lose their memory of preferences. When an agent proposes to a matched item with no preferences, the item prefers the proposing agent instead of the matched agent. As the number of tentative matchings increases  throughout the execution of the algorithm, there can only be $n$ resets of the preferences, and thus the number of proposals is bounded by $n^3$.

Using the above rule, four new algorithms analogous to those in Section~\ref{ss:memory} can be constructed. 

We first present the Accept-First algorithms. The temporary memory analogue of Serial Dictatorship, namely TFS, is interesting. Each round of this algorithm operates like a Serial Dictatorship in which a subset of the agents repeatedly ``steal" items in chains until some agent chooses an unmatched item, whereupon a new round begins with a new agent beginning the stealing. The order of agent choices in a round is not fixed, but determined by the stealing process. We suggest the alternative name ``Iterative Dictatorship'' for this algorithm. The algorithm TFQ is the temporary memory analogue of the $1$-sided Boston algorithm. Like all queue-based algorithms it is harder to interpret than a stack-based algorithm.

The Accept-Last algorithms are harder to understand (but see the party interpretation below, which was the inspiration for our entire research program).

\begin{eg} (Temporary memory, Accept-First versus Accept-Last)
For the standard profile, the final assignment under TFS is $1:d, 2:a, 3:c, 4:b$ and the final assignment using TFQ is $1:a, 2:b, 3:d, 4:c$. By contrast, the final assignment using TLS is $1:b, 2:a, 3:d, 4:c$ while the final assignment under TLQ is $1:a, 2:b, 3:d, 4:c$. 
\end{eg}

\subsection{Further comments}
\label{ss:further}
We have presented 8 algorithms in a unified framework, corresponding to the dichotomies memory/no memory, stack/queue, Accept First/Accept Last. Basic description of their behaviour on the standard profile is shown in Table~\ref{t:stdprof}. Note that all give different outputs on this input, except TFQ and TLQ, which are of course different in general. A stronger statement, namely that all randomized versions are inequivalent algorithms, is shown by example in Appendix~\ref{app:alldiff}. Also note that on this input, only PLQ fails to give an efficient allocation. Running TTC on the output of PLQ yields the allocation $1:a, 2:c, 3:b, 4:d$.

\begin{table}
\caption{Behaviour of algorithms on standard profile where $n=4$}
\label{t:stdprof}
\begin{tabular}{ccc}
\hline
Algorithm & Output matching & Number of proposals \\
\hline
PFS & 1:a, 2:b, 3:c, 4:d & 10\\
PFQ & 1:a, 2:c, 3:d, 4:b & 9\\
PLS & 1:d, 2:c, 3:a, 4:b & 9\\
PLQ & 1:d, 2:c, 3:b, 4:a & 10\\
TFS & 1:d, 2:a, 3:c, 4:b & 18\\
TFQ & 1:a, 2:b, 3:d, 4:c & 33\\
TLS & 1:b, 2:a, 3:d, 4:c & 18\\
TLQ & 1:a, 2:b, 3:d, 4:c & 21\\
\end{tabular}

\end{table}

All algorithms can be interpreted in terms of a party game, with the host providing a stash of presents. As each person arrives at the party (say though a narrow door), they take a present from the stash or (in some cases) from another person. Permanent memory algorithms have a single round, and temporary memory algorithms begin a new round every time a new present is taken. For Accept-First algorithms, in each round each present can be taken at most once. 
For Accept-Last algorithms, in each round each (person, present) pair can occur at most once. The queue or stack discipline determines what happens to a partygoer when it loses its present: choose a replacement present immediately, or go to the back of the queue. The Accept-First queue-based algorithms would be uninteresting, as would PFS, but the others seem to us worth trying.

The TLS algorithm is closely related in this interpretation to the party game \emph{Yankee Swap} or White Elephant \cite{wiki:yankee}, which was the inspiration for our research program. In the real game, presents are contributed by partygoers and are wrapped, so no person has full information on their own preference. We are not aware of any other real-life party games based on the other algorithms.

\section{Properties of the algorithms}
\label{s:props}

Most of our algorithms fail to satisfy any of the common axiomatic properties. However, some have good average-case behaviour. Interestingly, they behave quite differently from each other. 

\subsection{Ex-post Efficiency}
\label{ss:expost}

\begin{prop} All Accept-First algorithms are ex-post efficient.
\end{prop}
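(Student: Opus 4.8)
The plan is to show that for any Accept-First algorithm, the output matching admits no Pareto improvement, i.e.\ no subset of agents can cyclically reassign items so that everyone involved is weakly better off and someone is strictly better off. The key structural fact I would exploit is that under Accept-First (with either permanent or temporary memory, and either stack or queue), once an agent is matched to an item that match is never broken; an agent's final item is therefore the item it ``settled'' on, and at that moment every item the agent strictly prefers was \emph{already matched} (to some other agent who had proposed earlier, either in the current round for temporary memory, or globally for permanent memory), since otherwise the proposing agent would have taken it or been accepted by it.

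First I would fix notation: let $f$ be the output matching, and suppose for contradiction there is a Pareto-improving trade along a cycle $a_1 \to a_2 \to \dots \to a_k \to a_1$, meaning each $a_i$ weakly prefers $f(a_{i+1})$ to $f(a_i)$ and at least one prefers it strictly. Since preferences are strict linear orders, a weak preference between distinct items is strict, so in fact every $a_i$ in the cycle strictly prefers $f(a_{i+1})$ to $f(a_i)$ (here I would note that if $f(a_{i+1}) = f(a_i)$ the cycle collapses, so all items in the cycle are distinct). Next I would look at the agent in the cycle who was the \emph{last} to become permanently matched — call it $a_j$, matched at some time $t$. At time $t$, agent $a_{j-1}$'s eventual item $f(a_{j-1})$... wait, the relevant point is about $a_j$ itself: at the moment $a_j$ proposed to and was accepted by $f(a_j)$, the item $f(a_{j+1})$, which $a_j$ strictly prefers, must have already been matched (otherwise under Accept-First $a_j$ would have proposed there first and been accepted, or $a_j$ would never have reached $f(a_j)$ in its preference scan). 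But $f(a_{j+1})$ is matched in the final assignment precisely to $a_{j+1}$, and once matched under Accept-First the partner never changes, so $a_{j+1}$ was already permanently matched to $f(a_{j+1})$ strictly before time $t$ — contradicting the choice of $a_j$ as the last agent in the cycle to be matched. This contradiction shows no such cycle exists, hence $f$ is efficient, and since this holds for every run (every initial order, and in the randomized version every realization), the algorithm is ex-post efficient.

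The main obstacle I expect is making precise the claim ``when $a_j$ settled on $f(a_j)$, every strictly-preferred item was already matched,'' uniformly across all four Accept-First variants. For permanent memory this is immediate: an agent only proposes down its list past items that have rejected it, and under Accept-First an item rejects an agent only if the item is already (permanently) matched. For temporary memory I need to be a little careful because memory resets: I would argue that within the round in which $a_j$ gets its final item, $a_j$'s proposals walk monotonically down its preference list, and each item above $f(a_j)$ that rejected $a_j$ in that round was matched at that time; moreover a matched item stays matched, so it is still matched at the end — and in particular $f(a_{j+1})$ is matched throughout from strictly before $a_j$'s acceptance. The queue-versus-stack distinction does not affect this argument since it only governs the \emph{order} in which rejected agents re-propose, not the invariant that Accept-First never breaks an existing match. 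I would present the permanent-memory case in full and indicate the (routine) modification for temporary memory, citing the termination/round structure already described in Sections~\ref{ss:memory} and \ref{ss:no memory}.
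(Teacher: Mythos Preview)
Your argument contains a logical slip that breaks it. You pick $a_j$ to be the \emph{last} agent in the cycle to be permanently matched, then show that $a_{j+1}$ was permanently matched strictly before time $t$, and declare this a contradiction. It is not: if $a_j$ is last, then \emph{every} other agent in the cycle---including $a_{j+1}$---was matched before $t$, so your conclusion is exactly what ``last'' predicts. The argument you have in mind works if you instead take $a_j$ to be the \emph{first} agent in the cycle to be permanently matched; then deducing $\tau(a_{j+1}) < \tau(a_j)$ genuinely contradicts minimality. With that fix, the permanent-memory case (PFS, PFQ) goes through as you describe.

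The temporary-memory case is a more serious gap. Your key structural claim, ``once matched under Accept-First the partner never changes,'' is simply false for TFS and TFQ: when memory resets, a fresh proposer displaces the current holder, so items change hands across rounds (see the TFS trace in Appendix~\ref{appss:TMEPS}). Your patch---``$f(a_{j+1})$ is matched throughout from strictly before $a_j$'s acceptance''---only tells you the \emph{item} was matched, not that it was matched to $a_{j+1}$. When $a_j$ is rejected by $f(a_{j+1})$ in round $r$, the holder is whoever proposed first to $f(a_{j+1})$ in round $r$, and that agent need not be $a_{j+1}$ and need not keep the item to the end. So you cannot deduce $\tau(a_{j+1}) < \tau(a_j)$, and the extremal-agent contradiction does not close. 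The paper avoids this by arguing inductively round by round: within a round the active agents run what is effectively a serial dictatorship (so no trading cycle among them), inactive agents are efficient by induction, and every active agent prefers its item to every inactive item because those were available to steal. That decomposition is what makes the temporary-memory case go through; your global extremal argument would need a comparable round-level analysis to be complete.
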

\begin{proof}
We show by induction on the round number that the partial allocation constructed so far is ex-post efficient (in each case a round ends every time a previously unmatched item is chosen --- note that this is a different usage of ``round" to that in Section~\ref{ss:further}, where permanent memory algorithms have a single round). The first round always terminates with the first agent taking its top choice, and this is obviously an ex-post efficient outcome. Suppose that the result holds for all rounds before $i$ and consider round $i$. The entering agent chooses an item and retains it throughout the round (by Accept-First policy). Each agent taking an unmatched item or stealing an item during this round (an ``active agent") chooses the best item available. Such an agent $j$ would only wish to trade with an agent $k$ who has chosen since the last memory reset (in the permanent memory case, since the beginning of the algorithm; in the temporary memory case, since the beginning of the round). However in this case $k$ will not wish to trade with $j$. Thus there can be no mutually beneficially trading cycle within the group of active agents. By inductive hypothesis there is also no such cycle within the group of inactive agents. Each agent in the active group prefers its current item to everything held by the inactive group because such items were available to steal. The result follows because TTC will terminate with no trades.
\end{proof}

\begin{eg} All Accept-Last algorithms fail ex-post efficiency. To see this, consider the profile where agents 1,2,3 have respective preference orders $a>b>c, a>b>c, b>a>c$. Direct computation shows that for each algorithm X, there is some initial agent order such that algorithm XG formed by running TTC on the output of X gives a different result. Thus X cannot be ex-post efficient. 
\if01
(PM AL S) - agent 1 \& 2 a>b>c, agent 3 b>a>c, agent 1:a, agent 2:a, agent 1:b agent 3:a agent 2:b agent 1:c, 1:c 2:b 3:a, agent 2 and 3 can swap
(PM AL Q) - \jl{same preferences as MLS
Memory AcceptLast Queue (in Pref Order):
0.000000 0.500000 0.500000 
0.000000 0.500000 0.500000 
0.000000 1.000000 0.000000 
Memory AcceptLast Queue +GTTC(in Pref Order):
0.500000 0.000000 0.500000 
0.500000 0.000000 0.500000 
1.000000 0.000000 0.000000 }
(TM AF S) - with 1 agent and m item, the allocation is ExEff (trivial). Assuming that i-1 agents and m item allocation is ExEff, then i agent and n item also is. Any agents that get changes his allocated item when the $i^th$ agent comes in has the same item as if the algorithm is a SD, thus those agents cannot improve on his allocation. Agents that does not change his item cannot trade with each other because the allocation with i-1 agents are ExEff. Thus at n agents and m item, the allocation is ExEff.
\fi
\if01

details

Memory AcceptLast Stack (in Pref Order):
0.166667 0.333333 0.500000 
0.166667 0.333333 0.500000 
0.333333 0.666667 0.000000 
Memory AcceptLast Stack +GTTC(in Pref Order):
0.500000 0.000000 0.500000 
0.500000 0.000000 0.500000 
1.000000 0.000000 0.000000 
NoMemory AcceptLast Stack (in Pref Order):
0.000000 0.500000 0.500000 
0.000000 0.500000 0.500000 
0.000000 1.000000 0.000000 
NoMemory AcceptLast Stack +GTTC(in Pref Order):
0.500000 0.000000 0.500000 
0.500000 0.000000 0.500000 
1.000000 0.000000 0.000000 
Memory AcceptLast Queue (in Pref Order):
0.000000 0.500000 0.500000 
0.000000 0.500000 0.500000 
0.000000 1.000000 0.000000 
Memory AcceptLast Queue +GTTC(in Pref Order):
0.500000 0.000000 0.500000 
0.500000 0.000000 0.500000 
1.000000 0.000000 0.000000 
NoMemory AcceptLast Queue (in Pref Order):
0.000000 0.500000 0.500000 
0.000000 0.500000 0.500000 
0.000000 1.000000 0.000000 
NoMemory AcceptLast Queue +GTTC(in Pref Order):
0.500000 0.000000 0.500000 
0.500000 0.000000 0.500000 
1.000000 0.000000 0.000000 
\fi

\end{eg}

\subsection{Ordinal Efficiency}
\label{ss:ordinal}

\if01 details 
\jl{The NM L S/Q algorithms are almost o-eff, with 98\% and 96\% IANC preferences n=4 are o-eff. counter examples:
Preference
a>b>c>d
a>b>c>d
b>a>d>c
b>c>a>d
NoMemory AcceptLast Stack +GTTC(in Pref Order):
0.208333 0.083333 0.416667 0.291667 
0.208333 0.083333 0.416667 0.291667 
0.000000 0.583333 0.416667 0.000000 
0.833333 0.166667 0.000000 0.000000 
NoMemory AcceptLast Queue +GTTC(in Pref Order):
0.250000 0.250000 0.250000 0.250000 
0.250000 0.250000 0.250000 0.250000 
0.000000 0.500000 0.500000 0.000000 
0.500000 0.500000 0.000000 0.000000 
}
\fi

A random allocation $S$ is ordinally efficient if there is not another random allocation, $S'$ that each agent $i$ SD-prefers $S'_i$ to $S_i$, with at least 1 agent strictly SD-preferring their allocation under $S'$. Ordinal efficiency implies ex-post efficiency, but not vice versa \cite{BoMo2001}.

\begin{eg} All Accept-First algorithms fail ordinal efficiency. A counterexample (details omitted): suppose that agents 1, 2 prefer $a>b>c>d$, while agents 3, 4 have preference $a>b>d>c$. This same counterexample also works for the ex-post efficient algorithms PLSG and PLQG. A different counterexample with $n=4$ works for TLSG and TLQG, and ordinal efficiency seems to be violated less often for these two algorithms.

\if01 details
\jl{
The two memory accept last algorithm also fails o-eff on this preference.
Memory AcceptFirst Stack (in Pref Order):
0.250000 0.250000 0.416667 0.083333 
0.250000 0.250000 0.416667 0.083333 
0.250000 0.250000 0.416667 0.083333 
0.250000 0.250000 0.416667 0.083333 
NoMemory AcceptFirst Stack (in Pref Order):
0.250000 0.250000 0.416667 0.083333 
0.250000 0.250000 0.416667 0.083333 
0.250000 0.250000 0.416667 0.083333 
0.250000 0.250000 0.416667 0.083333 
Memory AcceptLast Stack (in Pref Order):
0.250000 0.250000 0.416667 0.083333 
0.250000 0.250000 0.416667 0.083333 
0.250000 0.250000 0.416667 0.083333 
0.250000 0.250000 0.416667 0.083333 
Memory AcceptLast Stack +GTTC(in Pref Order):
0.250000 0.250000 0.416667 0.083333 
0.250000 0.250000 0.416667 0.083333 
0.250000 0.250000 0.416667 0.083333 
0.250000 0.250000 0.416667 0.083333 
Memory AcceptFirst Queue (in Pref Order):
0.250000 0.250000 0.416667 0.083333 
0.250000 0.250000 0.416667 0.083333 
0.250000 0.250000 0.416667 0.083333 
0.250000 0.250000 0.416667 0.083333 
NoMemory AcceptFirst Queue (in Pref Order):
0.250000 0.250000 0.416667 0.083333 
0.250000 0.250000 0.416667 0.083333 
0.250000 0.250000 0.416667 0.083333 
0.250000 0.250000 0.416667 0.083333 
Memory AcceptLast Queue (in Pref Order):
0.250000 0.250000 0.416667 0.083333 
0.250000 0.250000 0.416667 0.083333 
0.250000 0.250000 0.416667 0.083333 
0.250000 0.250000 0.416667 0.083333 
Memory AcceptLast Queue +GTTC(in Pref Order):
0.250000 0.250000 0.416667 0.083333 
0.250000 0.250000 0.416667 0.083333 
0.250000 0.250000 0.416667 0.083333 
0.250000 0.250000 0.416667 0.083333 
}
\fi
\end{eg}

\subsection{Strategyproofness}
\label{ss:strategyproof}

An assignment algorithm is \emph{strategyproof} if no agent has incentive to misreport its preferences, irrespective of what other agents do. In other words, truthful reporting is a strictly dominant strategy for each agent. This condition is of course rather strong, but is known to be satisfied by Serial Dictatorship and hence by PFS. All other algorithms discussed here fail to satisfy it as we show below. For random allocation algorithms, the incentive is understood to be expressed in terms of expected utility in the usual way.

The algorithm Probabilistic Serial satisfies \emph{weak strategyproofness} \cite{BoMo2001}, which says that no agent has incentive  to deviate, where the incentive is expressed in terms of first-order stochastic dominance. In other words, for each profile and each agent, there is \emph{some} consistent choice of utility function for that agent for which deviation from truthfulness is unprofitable. Note that this is weaker than strategyproofness, which says that for each profile and agent, and for \emph{every} consistent choice of utility function for that agent, deviation from truthfulness is unprofitable.

Clearly, if algorithm X is strategyproof for every initial order of agents, so is RX. 

We show by examples that none of the randomized versions of our algorithms are weakly strategyproof. Consider the case where agents $1,2,3,4$ all have preference $a>b>c>d$. 
If some agent submits instead $a>c>b>d$ while the others remain truthful, 
algorithms TLS, TLG, PLS, PLQ, TLQ, TLQG obtain a preferable outcome for that agent. 

Similar examples show that PLQG, PLSG, TFS and TFQ, and PFQ all fail weak strategyproofness.  

\if01 
details
NMALS: yields 3/4 1/4 0 0
NMALS+G: 3/4 0 1/4 0
MALS: 1/4 1/2 1/4 0
MALQ: 1/4 3/4 0 0
NMALQ: 3/4 1/4 0 0
NMALQ+G: 3/4 0 1/4 0

With MALQ+G:
True Preferences:
a,b,c,d
a,b,c,d
a,d,b,c
a,c,d,b
Agent 4 gets 1/4 0 7/12 1/6
By submitting a,c,b,d Agent 4 gets 1/4 0 3/4 0

NMAFS \& NMAFQ:
True Preferences:
a,b,c,d
a,b,c,d
a,b,c,d
a,c,b,d
NMAFS: agent 4 gets 1/4 0 1/4 1/2
NMAFQ: agent 4 gets 0 1/4 0 3/4

It is trivial to see that if agent 4 submits a,b,c,d, they will get the same proportional allocation, which SD dominates the allocation with true preferences.

NB:
True preferences:
a,b,c,d
a,b,c,d
c,a,b,d
a,c,b,d

Agent 4 gets 1/3 0 0 2/3
By submitting a,b,c,d, they will get 1/3 1/3 0 1/3 which SD dominates the allocation from true preference
\fi

\if01
\jl{For MALS+G, agent 1-3 pref = a>b>c>d, agent 4 pref = b>c>d>a. allocation is 
Memory AcceptLast Stack +GTTC(in Pref Order):
0.333333 0.250000 0.250000 0.166667 
0.333333 0.250000 0.250000 0.166667 
0.333333 0.250000 0.250000 0.166667 
0.250000 0.250000 0.500000 0.000000 
if agent 4 submits b>c>a>d, allocation is
Memory AcceptLast Stack +GTTC(in Pref Order):
0.333333 0.083333 0.250000 0.333333 
0.333333 0.083333 0.250000 0.333333 
0.333333 0.083333 0.250000 0.333333 
0.750000 0.250000 0.000000 0.000000 }
\fi

\subsection{Utilitarian welfare}
\label{ss:welfare}

We give a basic analysis here, and refer the reader to our more extensive analysis \cite{WiLo2017b}. We analyse here only the average-case performance of the randomized versions of the algorithms under truthful behaviour. We impute utility values of agents by requiring them to all have the same (Borda) utility function, whereby the $i$th choice corresponds to utility $n-1+i$. We consider the \emph{utilitarian social welfare}, which is the sum over all agents of the utility of their allocation. The optimal value of the utilitarian social welfare is efficiently computable, for example using the \emph{Hungarian algorithm} \cite{}. This allows us to quantify the fraction of the maximum possible social welfare that is lost, on average, by each algorithm. We use a Java implementation by K.~Stern \cite{Ster2012}.

Results are shown in Figure~\ref{fig:util_8}. They show that RSD is outperformed substantially by PFQ, TFQ and TFS, but is better than the Accept-Last algorithms. In Figure~\ref{fig:util_TTC} we show how the welfare improves substantially when TTC is run on the output of our Accept-Last algorithms. In Figure~\ref{fig:util_5} we show how our best-performing algorithms, namely TLQG and TLQS, compare with the standard algorithms RSD, PS and Naive Boston. In fact all the Accept-Last algorithms with TTC outperform those standard algorithms.

\begin{figure}
\caption{Utilitarian welfare loss of our 8 basic algorithms}
\label{fig:util_8}
\includegraphics[width=14cm]{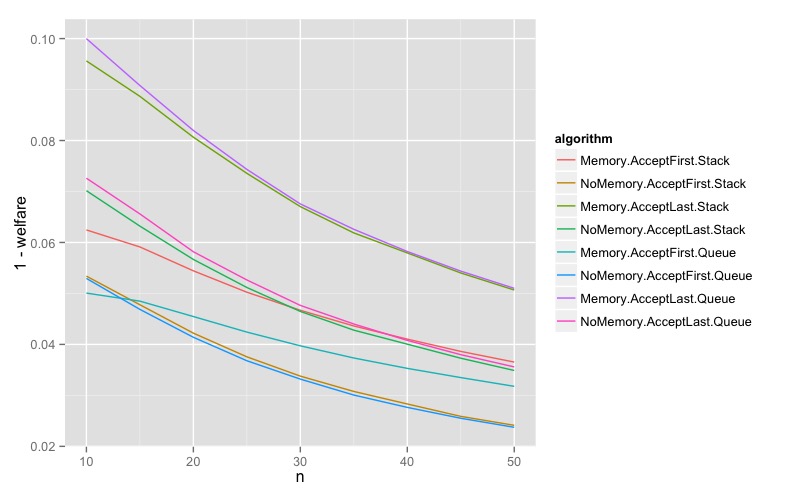}
\end{figure}

\begin{figure}
\caption{Utilitarian welfare improvement for Accept-Last with TTC}
\label{fig:util_TTC}
\includegraphics[width=14cm]{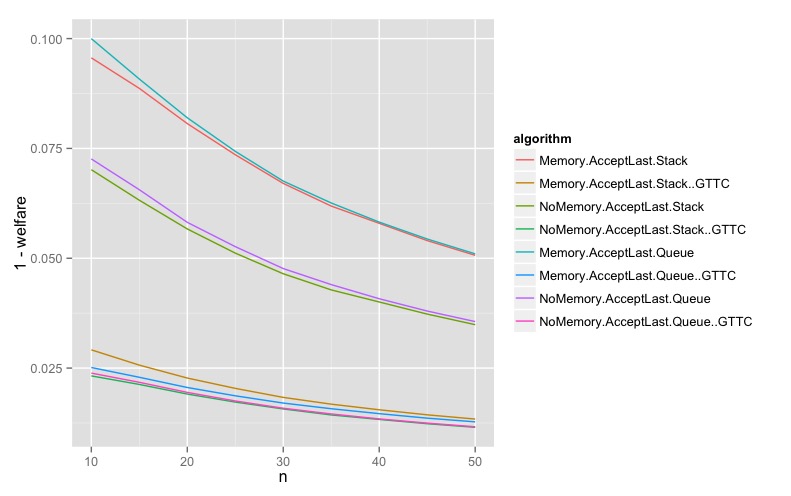}
\end{figure}

\begin{figure}
\caption{Utilitarian welfare loss comparison}
\label{fig:util_5}
\includegraphics[width=14cm]{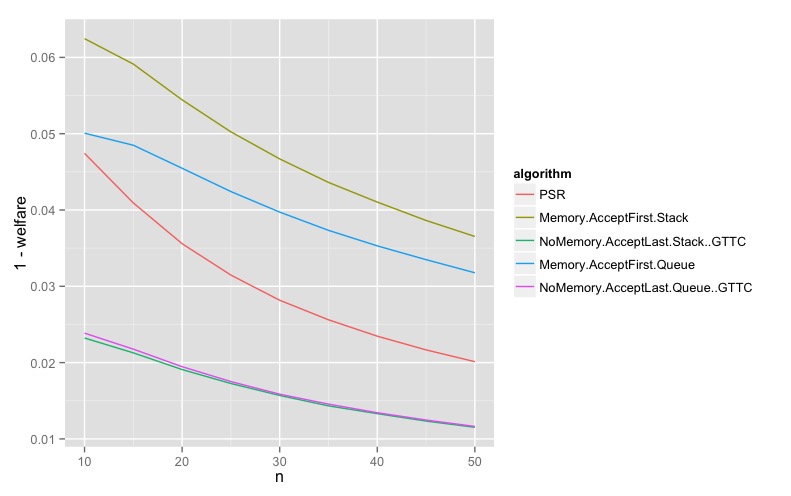}
\end{figure}

\subsection{Egalitarian welfare}
\label{ss:egal}

We also consider the egalitarian welfare, namely the welfare of the worst-off agent. Since exact computation of the optimum is difficult \cite{}, we scale by $n$ instead of the exact optimum. 

Results are similar to the utilitarian case and show the non-competitiveness of RSD and Naive Boston, the positive effect of GTTC, and the overall superiority of TLSG and TLQG. In Figure~\ref{fig:egal_4} the lines for TLSG and TLQG are indistinguishable. We do not compare with PS because it is inherently random and so comparison would be unfair to the other algorithms --- the expectation of the minimum welfare is less than the same as the minimum of the expectations. Note that welfare increases with $n$ for the new algorithms, but decreases for the old ones.

\if01
\begin{figure}
\caption{Egalitarian welfare of our 8 basic algorithms}
\label{fig:egal_8}
\includegraphics[width=14cm]{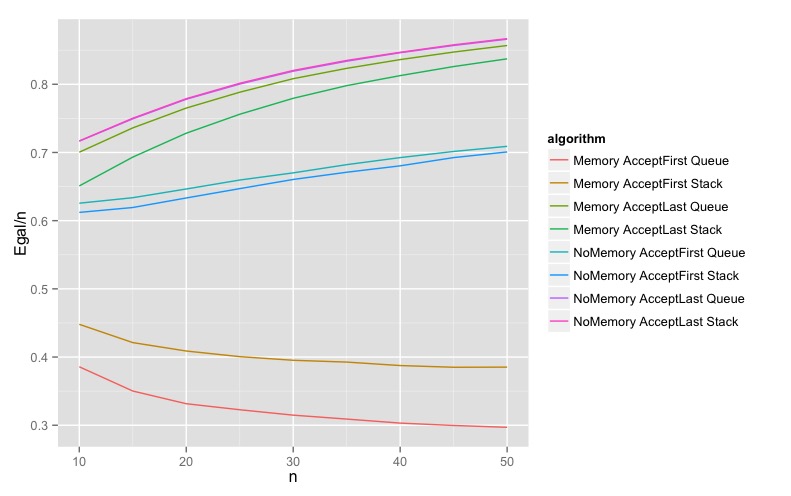}
\end{figure}

\begin{figure}
\caption{Egalitarian welfare improvement for Accept-Last with TTC}
\label{fig:egal_TTC}
\includegraphics[width=14cm]{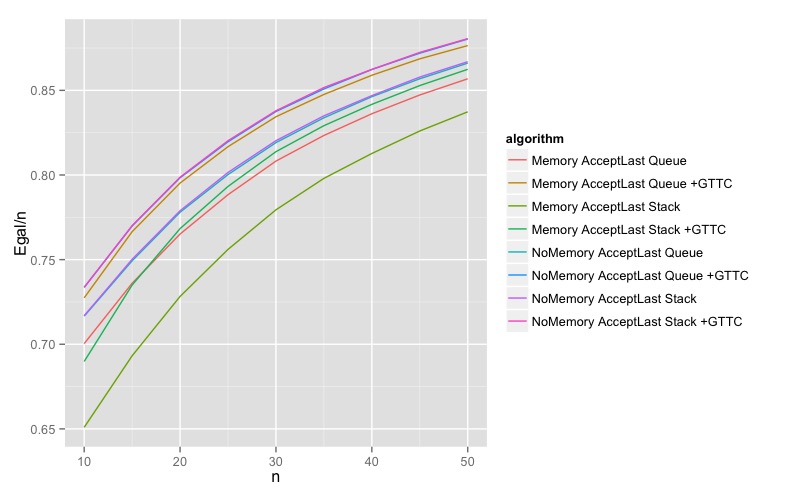}
\end{figure}
\fi

\begin{figure}
\caption{Normalized egalitarian welfare comparison}
\label{fig:egal_4}
\includegraphics[width=14cm]{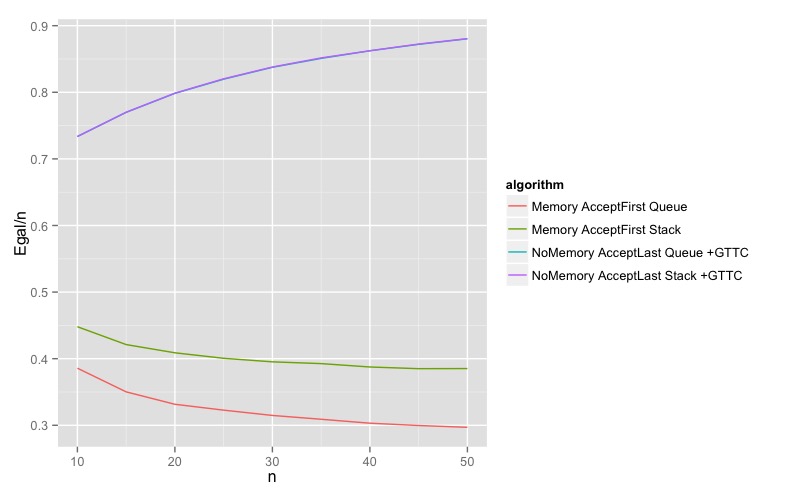}
\end{figure}

\subsubsection{Egalitarian welfare bounds}
\label{sss:prop k}

Say that an algorithm satisfies a \emph{conditional egalitarian welfare bound} of $k$ if every agent receives one of its top $k$ choices, \emph{whenever that is possible under some allocation}. If $k$ happens to be the minimal possible value, this says that the algorithm yields optimal egalitarian Borda welfare on that input. All reasonable algorithms (including all those in this paper) satisfy a bound of $1$, because if all agents have different top choices, each receives its top choice. We investigated the case $k=2$ and found that none of our algorithms satisfy it in general (details omitted). However, TLS satisfies the bound with $k=2$ when $n=3$, as does its queue-based counterpart TLQ, while none of the other algorithms does.  Thus for $n=3$ these algorithms are egalitarian-optimal.

\if01
\jl{each row is the probability of agent i getting their $j$th choice instead of getting item j}

I have got examples of each algorithm failing k=2.

a,b,c
a,b,c
a,c,b

PSR, AB, RSD, NMAFS, MALS, NB, NMAFQ, MALQ all give allocation:

1/3 1/2 1/6
1/3 1/2 1/6
1/3 2/3 0

when 1:a, 2:b, 3:c will have each agent getting top 2 choices.

NMALQ fails k=2 on :
a,b,c,d
a,b,c,d
a,c,d,b
b,d,a,c

allocation:
1/2 1/3 1/6 0
1/2 1/3 1/6 0
0 2/3 1/3/ 0
1/3 2/3 0 0

NMALS fails k=2 on
a,b,c,d
a,b,c,d
a,c,d,b
c,d,a,b

allocation:
1/2 1/2 0 0
1/2 1/2 0 0
0 3/4 1/4 0
1/4 3/4 0 0

The two NMAL algorithms passes k=2 for n=m=3.

\fi

\subsection{Order bias}
\label{ss:order}

In the case where all agents have the same preferences over items, some algorithms (such as serial dictatorship) are clearly biased toward the first agent while others (accept-last algorithms) are clearly biased toward the last agent. We define the \emph{order bias} of an algorithm to be the maximum over all pairs of agents of the difference of the expected (under the uniform distribution on preferences) Borda welfare gained, and normalize by $n$.

Results show that the order bias of queue-based algorithms is markedly smaller than that for stack-based algorithms. Our best welfare algorithms, namely TLSG and TLQG, have almost zero order bias (PS has zero order bias by definition), but the randomized Serial Dictatorship and Naive Boston algorithms have substantial order bias, with the former being clearly more biased than all other algorithms. Adding TTC to the Accept-Last algorithms substantially reduces order bias (not shown).

\begin{figure}
\caption{Normalized order bias of our 8 basic algorithms}
\label{fig:bias_8}
\includegraphics[width=14cm]{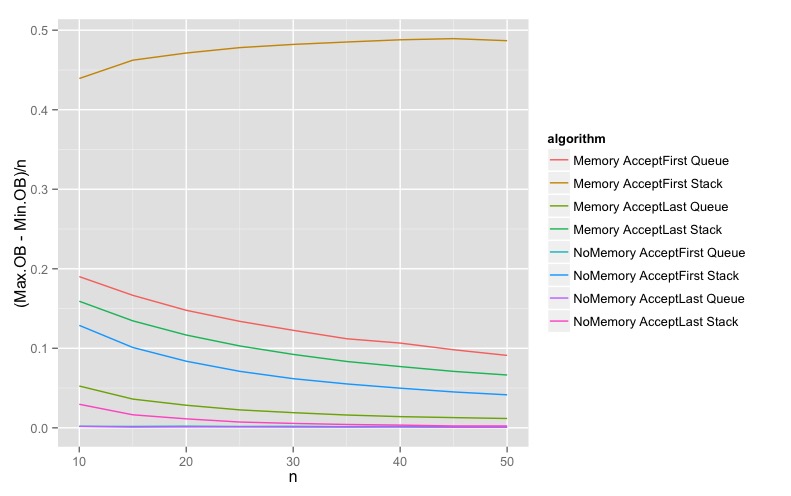}
\end{figure}

\if01
\begin{figure}
\caption{Normalized order bias improvement for Accept-Last with TTC}
\label{fig:bias_TTC}
\includegraphics[width=14cm]{bias_TTC.jpeg}
\end{figure}
\fi

\begin{figure}
\caption{Normalized order bias comparison}
\label{fig:bias_4}
\includegraphics[width=14cm]{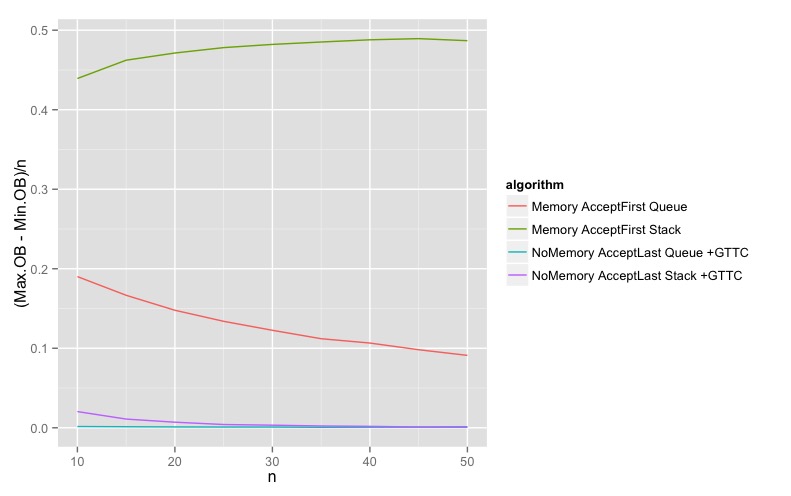}
\end{figure}

\section{Conclusion}
\label{s:conc}

\subsection{Summary of results}
\label{ss:summary}
We have introduced 10 new algorithms for 1-sided matching, none of them equivalent to each other or to any algorithms in the literature, to our knowledge. Their derivation using the Gale-Shapley framework gives a unified description. Each algorithm runs in worst-case time of order $n^2$ (permanent memory) or $n^3$ (temporary memory). Although they lack strong axiomatic foundations, several of the algorithms perform well on criteria such as egalitarian welfare, utilitarian welfare, and order bias, in sharp contrast to the standard Serial Dictatorship or Boston algorithms. The algorithms TLSG and TLQG perform better overall than the standard algorithms Serial Dictatorship, Naive Boston, and Probabilistic Serial on such measures.

In this article we have introduced a new (to our knowledge) performance criterion related to symmetry, namely order bias. The queue-based algorithms perform better overall than the stack-based ones by this measure. The algorithm TLQ (and TLQG) has remarkably small order bias. Our intuition is that the queue simulates randomization of the order of choosing by agents. 

The approach we use of considering algorithms that are not ex-post efficient, and then running TTC on their output, seems new to us. Ex-post efficiency is, roughly speaking, a local optimum criterion. By avoiding prematurely locking in efficiency, our new Accept-Last + TTC algorithms seem to be able to achieve efficient outcomes with higher global welfare.

The benefits of using temporary versus permanent memory are relatively small compared to the gains made by using TTC, for example. However they are real and for situations where solution quality is substantially more important than runtime, we recommend its use. The best overall algorithms in terms of solution quality are arguably TLSG and TLQG. The latter has lower order bias and the former higher welfare, although the differences between the algorithms are small.

Some of our new algorithms may be useful for specialized situations (in addition to party games). For example, TFS seems to treat all agents equally in welfare except the last, who has a definite advantage --- this may be useful, for example, when one agent is a small child. Having (almost) zero order bias is a strong fairness condition that may be very important in some applications. Our $1$-sided Naive Boston algorithm PFQ  maximizes the number of agents receiving their first choice. 

Enlarging the stock of basic algorithms has more benefits than simply allowing us 10 more algorithms to choose from. The concept of \emph{hybridization} has been used by Mennle \& Seuken \cite{MeSe2013}. This simply forms a new algorithm $(1-p)A + pB$ from random allocation algorithms A and B and a fixed $p\in [0,1]$ by taking the convex combination $(1-p)M_A + pM_B$ of the stochastic matrices output A and B. This allows us to trade off desirable properties such as strategyproofness and efficiency in a controlled way. Mennle \& Seuken considered only RSD, PS, Naive Boston and the algorithm maximizing utilitarian welfare as their basic algorithms. We believe that our new algorithms will prove useful as building blocks for hybrid algorithms with good overall behaviour.

\subsection{Future work}
\label{ss:future}

The Adaptive Boston school choice algorithm improves over Naive Boston by allowing agents to skip proposals that will obviously be rejected because a school has reached capacity. It satisfies a property intermediate between weak strategyproofness and strategyproofness, called \emph{partial strategyproofness} by Mennle \& Seuken \cite{MeSe2014}. Although we found a sequential interpretation of Naive Boston that avoided discussion of simultaneous proposals, we have not yet done this for Adaptive Boston. 

Each of our algorithms extends to the school choice situation. The ``memory" component and the ``data structure" component translate directly with no changes required. The aceptance policy if more complicated. In school choice, schools accept applicants provisionally until capacity is reached, and then each new temporary enrolment requires an existing enrolment to be cancelled. Our Accept-Last and Accept-First policies in the capacity 1 case described above could also be termed ``Reject-Current'' or ``Reject-New". In the general school choice situation we would need to create fictitious prefernces for schools to enable them to decide which current student to reject. Some obvious ways to do that include  FIFO or  LIFO. 

The case where the number of items exceeds the number of agents, and agents receive bundles of items chosen one at a time, is complicated. The order in which agents should choose in each round (here a round ends when all agents have incremented  their previous total of items by 1) must be specified. For example, when we have 2 agents with preferences $a>b>c>d$ over 4 items, and the picking sequence $1221$, Serial Dictatorship awards $a, d$ to agent 1 and $b,c$ to agent 2. However under PFQ agent 2 attempts to get item $a$ and fails, going to the back of the queue and hence missing that turn. In the next turn it is allocated $b$, then agent 1 tries for $b$ and fails. Thus the picking sequence must be extended. Accept-Last algorithms work better in this situation.

We leave further exploration of these interesting cases with $m\neq n$ to future work.
\bibliographystyle{alpha}
\bibliography{assignment}

\begin{thebibliography}{Con17}

\bibitem[AS98]{AbSo1998}
Atila Abdulkadiro\u{g}lu and Tayfun S\"{o}nmez.
\newblock Random serial dictatorship and the core from random endowments in
  house allocation problems.
\newblock {\em Econometrica}, 66(3):689--701, 1998.

\bibitem[AS03]{AbSo2003}
Atila Abdulkadiro\u{g}lu and Tayfun S{\"o}nmez.
\newblock School choice: A mechanism design approach.
\newblock {\em The American Economic Review}, 93(3):729--747, 2003.

\bibitem[BM01]{BoMo2001}
Anna Bogomolnaia and Herv\'{e} Moulin.
\newblock A new solution to the random assignment problem.
\newblock {\em Journal of Economic theory}, 100(2):295--328, 2001.

\bibitem[Con17]{wiki:yankee}
Wikipedia Contributors.
\newblock White elephant gift exchange --- {W}ikipedia{,} the free
  encyclopedia, 2017.
\newblock [Online; accessed 2017-03-13].

\bibitem[GS62]{GaSh1962}
David Gale and Lloyd~S. Shapley.
\newblock College admissions and the stability of marriage.
\newblock {\em The American Mathematical Monthly}, 69(1):9--15, 1962.

\bibitem[Kes09]{Kest2009}
Onur Kesten.
\newblock Why do popular mechanisms lack efficiency in random environments?
\newblock {\em Journal of Economic Theory}, 144(5):2209--2226, 2009.

\bibitem[LW16]{WiLo2017b}
Jacky Lo and Mark~C. Wilson.
\newblock Average-case analysis of random assignment algorithms.
\newblock {\em Preprint}, 2016.

\bibitem[MS13]{MeSe2013}
T.~{Mennle} and S.~{Seuken}.
\newblock {Hybrid Mechanisms: Trading off Efficiency and Strategyproofness in
  One-Sided Matching}.
\newblock {\em ArXiv e-prints}, March 2013.

\bibitem[MS14]{MeSe2014}
Timo Mennle and Sven Seuken.
\newblock The {Naive} versus the {Adaptive} {Boston} {Mechanism}.
\newblock {\em arXiv preprint arXiv:1406.3327}, 2014.

\bibitem[SS74]{ShSc1974}
Lloyd Shapley and Herbert Scarf.
\newblock On cores and indivisibility.
\newblock {\em Journal of mathematical Economics}, 1(1):23--37, 1974.

\bibitem[Ste]{Ster2012}
Kevin Stern.
\newblock Hungarianalgorithm.java.
\newblock
  \url{https://github.com/KevinStern/software-and-algorithms/blob/master/src/main/java/blogspot/software_and_algorithms/stern_library/optimization/HungarianAlgorithm.java}.

\end{thebibliography}

\appendix
\appendixpage
\addappheadtotoc

\section{Sample executions of algorithms}
\label{apps:examples}
For a profile where agent 1, 2 and 3 have preferences $a > b > c > d$, and agent 4 have preferences $b > a > c > d$, and the initial ordering of agents being in ascending numerical order, the sequence of proposal and rejection using the various algorithms will be as follows.
\subsection{Permanent Memory, Early-Proposal preferred, Stack}
\label{appss:PMEPS}
\begin{tabular}[!ht]{|c|c|c|c|}
\hline
	Proposal & Outcome & Order of Remaining Agents & Current Partial Matching\\
\hline
    $a_1 =>$ a & tentatively matched & ${a_2, a_3, a_4}$ & $a_1$:a\\
    $a_2 =>$ a & $a_2$ rejected & ${a_2, a_3, a_4}$ & $a_1$:a\\
    $a_2 =>$ b & tentatively matched & ${a_3, a_4}$ & $a_1$:a, $a_2$:b\\
    $a_3 =>$ a & $a_3$ rejected & ${a_3, a_4}$ & $a_1$:a, $a_2$:b\\
    $a_3 =>$ b & $a_3$ rejected & ${a_3, a_4}$ & $a_1$:a, $a_2$:b\\
    $a_3 =>$ c & tentatively matched & ${a_4}$ & $a_1$:a, $a_2$:b, $a_3$:c\\
    $a_4 =>$ b & $a_4$ rejected & ${a_4}$ & $a_1$:a, $a_2$:b, $a_3$:c\\
    $a_4 =>$ a & $a_4$ rejected & ${a_4}$ & $a_1$:a, $a_2$:b, $a_3$:c\\
    $a_4 =>$ c & $a_4$ rejected & ${a_4}$ & $a_1$:a, $a_2$:b, $a_3$:c\\
    $a_4 =>$ d & tentatively matched & ${}$ & $a_1$:a, $a_2$:b, $a_3$:c, $a_4$:d\\
\hline
\end{tabular}
\subsection{Permanent Memory, Early-Proposal preferred, Queue}
\label{appss:PMEPQ}
\begin{tabular}[!ht]{|c|c|c|c|}
\hline
	Proposal & Outcome & Order of Remaining Agents & Current Partial Matching\\
\hline
    $a_1 =>$ a & tentatively matched & ${a_2, a_3, a_4}$ & $a_1$:a\\
    $a_2 =>$ a & $a_2$ rejected & ${a_3, a_4, a_2}$ & $a_1$:a\\
    $a_3 =>$ a & $a_3$ rejected & ${a_4, a_2, a_3}$ & $a_1$:a\\
    $a_4 =>$ b & tentatively matched & ${a_2, a_3}$ & $a_1$:a, $a_4$:b\\
    $a_2 =>$ b & $a_2$ rejected & ${a_3, a_2}$ & $a_1$:a, $a_4$:b\\
    $a_3 =>$ b & $a_3$ rejected & ${a_2, a_3}$ & $a_1$:a, $a_4$:b\\
    $a_2 =>$ c & tentatively matched & ${a_3}$ & $a_1$:a, $a_2$:c, $a_4$:b\\
    $a_3 =>$ c & $a_3$ rejected & ${a_3}$ & $a_1$:a, $a_2$:c, $a_4$:b\\
    $a_3 =>$ d & tentatively matched & ${}$ & $a_1$:a, $a_2$:c, $a_3$:d, $a_4$:b\\
\hline
\end{tabular}
\subsection{Permanent Memory, Last-Proposal preferred, Stack}
\label{appss:PMLPS}
\begin{tabular}[!ht]{|c|c|c|c|}
\hline
	Proposal & Outcome & Order of Remaining Agents & Current Partial Matching\\
\hline
    $a_1 =>$ a & tentatively matched & ${a_2, a_3, a_4}$ & $a_1$:a\\
    $a_2 =>$ a & $a_1$ rejected & ${a_1, a_3, a_4}$ & $a_2$:a\\
    $a_1 =>$ b & tentatively matched & ${a_3, a_4}$ & $a_1$:b, $a_2$:a\\
    $a_3 =>$ a & $a_2$ rejected & ${a_2, a_4}$ & $a_1$:b, $a_3$:a\\
    $a_2 =>$ b & $a_1$ rejected & ${a_1, a_4}$ & $a_2$:b, $a_3$:a\\
    $a_1 =>$ c & tentatively matched & ${a_4}$ & $a_1$:c, $a_2$:b, $a_3$:a\\
    $a_4 =>$ b & $a_2$ rejected & ${a_2}$ & $a_1$:c, $a_3$:a, $a_4$:b\\
    $a_2 =>$ c & $a_1$ rejected & ${a_1}$ & $a_2$:c, $a_3$:a, $a_4$:b\\
    $a_1 =>$ d & tentatively matched & ${}$ & $a_1$:d, $a_2$:c, $a_3$:a, $a_4$:b\\
\hline
\end{tabular}
\subsection{Permanent Memory, Last-Proposal preferred, Queue}
\label{appss:PMLPQ}
\begin{tabular}[!ht]{|c|c|c|c|}
\hline
	Proposal & Outcome & Order of Remaining Agents & Current Partial Matching\\
\hline
    $a_1 =>$ a & tentatively matched & ${a_2, a_3, a_4}$ & $a_1$:a\\
    $a_2 =>$ a & $a_1$ rejected & ${a_3, a_4, a_1}$ & $a_2$:a\\
    $a_3 =>$ a & $a_2$ rejected & ${a_4, a_1, a_2}$ & $a_3$:a\\
    $a_4 =>$ b & tentatively matched & ${a_1, a_2}$ & $a_3$:a, $a_4$:b\\
    $a_1 =>$ b & $a_4$ rejected & ${a_2, a_4}$ & $a_1$:b, $a_3$:a\\
    $a_2 =>$ b & $a_1$ rejected & ${a_4, a_1}$ & $a_2$:b, $a_3$:a\\
    $a_4 =>$ a & $a_3$ rejected & ${a_1, a_3}$ & $a_2$:b, $a_4$:a\\
	$a_1 =>$ c & tentatively matched & ${a_3}$ & $a_1$:c, $a_2$:b, $a_4$:a\\
	$a_3 =>$ c & $a_1$ rejected & ${a_1}$ & $a_2$:b, $a_3$:c, $a_4$:a\\
    $a_1 =>$ d & tentatively matched & ${}$ & $a_1$:d, $a_2$:c, $a_3$:b, $a_4$:a\\
\hline
\end{tabular}
\subsection{Temporary Memory, Early-Proposal preferred, Stack}
\label{appss:TMEPS}

\begin{tabular}[!ht]{|c|c|c|c|c|}
\hline
	Proposal & Outcome & Order of Remaining Agents & Current Partial Matching & Item Preferences\\
\hline
    $a_1 =>$ a & tentatively matched & ${a_2, a_3, a_4}$ & $a_1$:a & none\\
    $a_2 =>$ a & $a_1$ rejected & ${a_1, a_3, a_4}$ & $a_2$:a & $i_a$:$a_2 > a_1$\\
    $a_1 =>$ b & tentatively matched & ${a_3, a_4}$ & $a_1$:b, $a_2$:a & none\\
    $a_3 =>$ a & $a_2$ rejected & ${a_2, a_4}$ & $a_1$:b, $a_3$:a & $i_a$:$a_3 > a_2$\\
    $a_2 =>$ a & $a_2$ rejected & ${a_2, a_4}$ & $a_1$:b, $a_3$:a & $i_a$:$a_3 > a_2$\\
    $a_2 =>$ b & $a_1$ rejected & ${a_1, a_4}$ & $a_2$:b, $a_3$:a & $i_a$:$a_3 > a_2$\\
    & & & & $i_b$:$a_2 > a_1$\\
    $a_1 =>$ a & $a_1$ rejected & ${a_1, a_4}$ & $a_2$:b, $a_3$:a & $i_a$:$a_3 > a_2 > a_1$\\
    & & & & $i_b$:$a_2 > a_1$\\
    $a_1 =>$ b & $a_1$ rejected & ${a_1, a_4}$ & $a_2$:b, $a_3$:a & $i_a$:$a_3 > a_2 > a_1$\\
    & & & & $i_b$:$a_2 > a_1$\\
    $a_1 =>$ c & tentatively matched & ${a_4}$ & $a_1$:c, $a_2$:b, $a_3$:a & none\\
    $a_4 =>$ b & $a_2$ rejected & ${a_2}$ & $a_1$:c, $a_3$:a, $a_4$:b & $i_b$:$a_4 > a_2$\\
    $a_2 =>$ a & $a_3$ rejected & ${a_3}$ & $a_1$:c, $a_2$:a, $a_4$:b & $i_a$:$a_2 > a_3$\\
    & & & & $i_b$:$a_4 > a_2$\\
    $a_3 =>$ a & $a_3$ rejected & ${a_3}$ & $a_1$:c, $a_2$:a, $a_4$:b & $i_a$:$a_2 > a_3$\\
    & & & & $i_b$:$a_4 > a_2$\\
    $a_3 =>$ b & $a_3$ rejected & ${a_4}$ & $a_1$:c, $a_2$:a, $a_4$:b & $i_a$:$a_2 > a_3$\\
    & & & & $i_b$:$a_4 > a_2 > a_3$\\
    $a_3 =>$ c & $a_1$ rejected & ${a_1}$ & $a_2$:a, $a_3$:c, $a_4$:b & $i_a$:$a_2 > a_3$\\
    & & & & $i_b$:$a_4 > a_2$\\
    & & & & $i_c$:$a_3 > a_1$\\
    $a_1 =>$ a & $a_1$ rejected & ${a_1}$ & $a_2$:a, $a_3$:c, $a_4$:b & $i_a$:$a_2 > a_3 > a_1$\\
    & & & & $i_b$:$a_4 > a_2$\\
    & & & & $i_c$:$a_3 > a_1$\\
    $a_1 =>$ b & $a_1$ rejected & ${a_1}$ & $a_2$:a, $a_3$:c, $a_4$:b & $i_a$:$a_2 > a_3 > a_1$\\
    & & & & $i_b$:$a_4 > a_2 > a_1$\\
    & & & & $i_c$:$a_3 > a_1$\\
    $a_1 =>$ c & $a_1$ rejected & ${a_1}$ & $a_2$:a, $a_3$:c, $a_4$:b & $i_a$:$a_2 > a_3 > a_1$\\
    & & & & $i_b$:$a_4 > a_2 > a_1$\\
    & & & & $i_c$:$a_3 > a_1$\\
    $a_1 =>$ d & tentatively matched & ${}$ & $a_1$:d, $a_2$:a, $a_3$:c, $a_4$:b & none\\
\hline
\end{tabular}
\subsection{Temporary Memory, Early-Proposal preferred, Queue}
\label{appss:TMEPQ}

\begin{longtable}[!ht]{|c|c|c|c|c|}
\hline
	Proposal & Outcome & Order of Remaining Agents & Current Partial Matching & Item Preferences\\
\hline
    $a_1 =>$ a & tentatively matched & ${a_2, a_3, a_4}$ & $a_1$:a & none\\
    $a_2 =>$ a & $a_1$ rejected & ${a_3, a_4, a_1}$ & $a_2$:a & $i_a$:$a_2 > a_1$\\
    $a_3 =>$ a & $a_3$ rejected & ${a_4, a_1, a_3}$ & $a_2$:a & $i_a$:$a_2 > a_1 > a_3$\\
    $a_4 =>$ b & tentatively matched & ${a_1, a_3}$ & $a_2$:a, $a_4$:b & none\\
    $a_1 =>$ a & $a_2$ rejected & ${a_3, a_2}$ & $a_1$:a, $a_4$:b & $i_a$:$a_1 > a_2$\\
    $a_3 =>$ a & $a_3$ rejected & ${a_2, a_3}$ & $a_1$:a, $a_4$:b & $i_a$:$a_1 > a_2 > a_3$\\
    $a_2 =>$ a & $a_2$ rejected & ${a_3, a_2}$ & $a_1$:a, $a_4$:b & $i_a$:$a_1 > a_2 > a_3$\\
    $a_3 =>$ b & $a_4$ rejected & ${a_2, a_4}$ & $a_1$:a, $a_3$:b & $i_a$:$a_2 > a_1 > a_3$\\
    & & & & $i_b$:$a_3 > a_4$\\
    $a_2 =>$ b & $a_2$ rejected & ${a_4, a_2}$ & $a_1$:a, $a_3$:b & $i_a$:$a_2 > a_1 > a_3$\\
    & & & & $i_b$:$a_3 > a_4 > a_2$\\
    $a_4 =>$ a & $a_4$ rejected & ${a_2, a_4}$ & $a_1$:a, $a_3$:b & $i_a$:$a_2 > a_1 > a_3 > a_4$\\
    & & & & $i_b$:$a_3 > a_4 > a_2$\\
    $a_2 =>$ c & tentatively matched & ${a_4}$ & $a_1$:a, $a_2$:c, $a_3$:b & none\\
    $a_4 =>$ b & $a_3$ rejected & ${a_3}$ & $a_1$:a, $a_2$:c, $a_4$:b & $i_b$:$a_4 > a_3$\\
    $a_3 =>$ a & $a_1$ rejected & ${a_1}$ & $a_2$:c, $a_3$:a, $a_4$:b & $i_a$:$a_3 > a_1$\\
    & & & & $i_b$:$a_4 > a_3$\\
    $a_1 =>$ a & $a_1$ rejected & ${a_1}$ & $a_2$:c, $a_3$:a, $a_4$:b & $i_a$:$a_3 > a_1$\\
    & & & & $i_b$:$a_4 > a_3$\\
    $a_1 =>$ b & $a_1$ rejected & ${a_1}$ & $a_2$:c, $a_3$:a, $a_4$:b & $i_a$:$a_3 > a_1$\\
    & & & & $i_b$:$a_4 > a_3 > a_1$\\
    $a_1 =>$ c & $a_2$ rejected & ${a_2}$ & $a_1$:c, $a_3$:a, $a_4$:b & $i_a$:$a_3 > a_1$\\
    & & & & $i_b$:$a_4 > a_3 > a_1$\\
    & & & & $i_c$:$a_1 > a_2$\\
    $a_2 =>$ a & $a_2$ rejected & ${a_2}$ & $a_1$:c, $a_3$:a, $a_4$:b & $i_a$:$a_3 > a_1 > a_2$\\
    & & & & $i_b$:$a_4 > a_3 > a_1$\\
    & & & & $i_c$:$a_1 > a_2$\\
    $a_2 =>$ b & $a_2$ rejected & ${a_2}$ & $a_1$:c, $a_3$:a, $a_4$:b & $i_a$:$a_3 > a_1 > a_2$\\
    & & & & $i_b$:$a_4 > a_3 > a_1 > a_2$\\
    & & & & $i_c$:$a_1 > a_2$\\
    $a_2 =>$ c & $a_2$ rejected & ${a_2}$ & $a_1$:c, $a_3$:a, $a_4$:b & $i_a$:$a_3 > a_1 > a_2$\\
    & & & & $i_b$:$a_4 > a_3 > a_1 > a_2$\\
    & & & & $i_c$:$a_1 > a_2$\\
    $a_2 =>$ d & tentatively matched & $a_1$:c, $a_2$:d, $a_3$:a, $a_4$:b & none\\
    $a_3 =>$ b & $a_1$ rejected & ${a_4, a_1}$ & $a_2$:a, $a_3$:b & $i_a$:$a_2 > a_1 > a_3$\\
    & & & & $i_b$:$a_3 > a_1 > a_4$\\
    $a_4 =>$ b & $a_4$ rejected & ${a_1, a_4}$ & $a_2$:a, $a_3$:b & $i_a$:$a_2 > a_1 > a_3$\\
    & & & & $i_b$:$a_3 > a_1 > a_4$\\
    $a_1 =>$ c & tentatively matched & ${a_4}$ & $a_1$:c, $a_3$:b, $a_4$:a & none\\
    $a_4 =>$ b & $a_3$ rejected & ${a_3}$ & $a_1$:c, $a_2$:a, $a_4$:b & $i_b$:$a_4 > a_3$\\
    $a_3 =>$ a & $a_2$ rejected & ${a_2}$ & $a_1$:c, $a_3$:a, $a_4$:b & $i_a$:$a_3 > a_2$\\
    & & & & $i_b$:$a_4 > a_3$\\
    $a_2 =>$ a & $a_2$ rejected & ${a_2}$ & $a_1$:c, $a_3$:a, $a_4$:b & $i_a$:$a_3 > a_2$\\
    & & & & $i_b$:$a_4 > a_3$\\
    $a_2 =>$ b & $a_4$ rejected & ${a_4}$ & $a_1$:c, $a_2$:b, $a_3$:a & $i_a$:$a_3 > a_2$\\
    & & & & $i_b$:$a_2 > a_4 > a_3$\\
    $a_4 =>$ a & $a_3$ rejected & ${a_4}$ & $a_1$:c, $a_2$:b, $a_4$:a & $i_a$:$a_4 > a_3 > a_2$\\
    & & & & $i_b$:$a_2 > a_4 > a_3$\\
    $a_3 =>$ b & $a_3$ rejected & ${a_3}$ & $a_1$:c, $a_2$:b, $a_4$:a & $i_a$:$a_4 > a_3 > a_2$\\
    & & & & $i_b$:$a_2 > a_4 > a_3$\\
    $a_3 =>$ c & $a_1$ rejected & ${a_1}$ & $a_2$:b, $a_3$:c, $a_4$:a & $i_a$:$a_4 > a_3 > a_2$\\
    & & & & $i_b$:$a_2 > a_4 > a_3$\\
    & & & & $i_c$:$a_3 > a_1$\\
    $a_1 =>$ a & $a_4$ rejected & ${a_4}$ & $a_1$:a, $a_2$:b, $a_3$:c & $i_a$:$a_1 > a_4 > a_3 > a_2$\\
    & & & & $i_b$:$a_2 > a_4 > a_3$\\
    & & & & $i_c$:$a_3 > a_1$\\
    $a_4 =>$ c & $a_3$ rejected & ${a_3}$ & $a_1$:a, $a_2$:b, $a_4$:c & $i_a$:$a_1 > a_4 > a_3 > a_2$\\
    & & & & $i_b$:$a_2 > a_4 > a_3$\\
    & & & & $i_c$:$a_4 > a_3 > a_1$\\
    $a_3 =>$ d & tentatively matched & ${}$ & $a_1$:a, $a_2$:b, $a_3$:d, $a_4$:c & none\\
\hline
\end{longtable}
\subsection{Temporary Memory, Last-Proposal preferred, Stack}
\label{appss:TMLPS}

\begin{tabular}[!ht]{|c|c|c|c|c|}
\hline
	Proposal & Outcome & Order of Remaining Agents & Current Partial Matching & Item Preferences\\
\hline
    $a_1 =>$ a & tentatively matched & ${a_2, a_3, a_4}$ & $a_1$:a & none\\
    $a_2 =>$ a & $a_1$ rejected & ${a_1, a_3, a_4}$ & $a_2$:a & $i_a$:$a_2 > a_1$\\
    $a_1 =>$ b & tentatively matched & ${a_3, a_4}$ & $a_1$:b, $a_2$:a & none\\
    $a_3 =>$ a & $a_2$ rejected & ${a_2, a_4}$ & $a_1$:b, $a_3$:a & $i_a$:$a_3 > a_2$\\
    $a_2 =>$ b & $a_1$ rejected & ${a_1, a_4}$ & $a_2$:b, $a_3$:a & $i_a$:$a_3 > a_2$\\
    & & & & $i_b$:$a_2 > a_1$\\
    $a_1 =>$ a & $a_3$ rejected & ${a_3, a_4}$ & $a_1$:a, $a_2$:b & $i_a$:$a_1 > a_3 > a_2$\\
    & & & & $i_b$:$a_2 > a_1$\\
    $a_3 =>$ b & $a_2$ rejected & ${a_2, a_4}$ & $a_1$:a, $a_3$:b & $i_a$:$a_1 > a_3 > a_2$\\
    & & & & $i_b$:$a_3 > a_2 > a_1$\\
    $a_2 =>$ c & tentatively matched & ${a_4}$ & $a_1$:a, $a_2$:c, $a_3$:b & none\\
    $a_4 =>$ b & $a_3$ rejected & ${a_3}$ & $a_1$:a, $a_2$:c, $a_4$:b & $i_b$:$a_4 > a_3$\\
    $a_3 =>$ a & $a_1$ rejected & ${a_1}$ & $a_2$:c, $a_3$:a, $a_4$:b & $i_a$:$a_3 > a_1$\\
    & & & & $i_b$:$a_4 > a_3$\\
    $a_1 =>$ a & $a_1$ rejected & ${a_1}$ & $a_2$:c, $a_3$:a, $a_4$:b & $i_a$:$a_3 > a_1$\\
    & & & & $i_b$:$a_4 > a_3$\\
    $a_1 =>$ b & $a_4$ rejected & ${a_4}$ & $a_1$:b, $a_2$:c, $a_3$:a & $i_a$:$a_3 > a_1$\\
    & & & & $i_b$:$a_1 > a_4 > a_3$\\
    $a_4 =>$ a & $a_3$ rejected & ${a_3}$ & $a_1$:b, $a_2$:c, $a_4$:a & $i_a$:$a_4 > a_3 > a_1$\\
    & & & & $i_b$:$a_1 > a_4 > a_3$\\
    $a_3 =>$ b & $a_3$ rejected & ${a_3}$ & $a_1$:b, $a_2$:c, $a_4$:a & $i_a$:$a_4 > a_3 > a_1$\\
    & & & & $i_b$:$a_1 > a_4 > a_3$\\
    $a_3 =>$ c & $a_2$ rejected & ${a_2}$ & $a_1$:b, $a_3$:c, $a_4$:a & $i_a$:$a_4 > a_3 > a_1$\\
    & & & & $i_b$:$a_1 > a_4 > a_3$\\
    & & & & $i_c$:$a_3 > a_2$\\
    $a_2 =>$ a & $a_4$ rejected & ${a_4}$ & $a_1$:b, $a_2$:a, $a_3$:c & $i_a$:$a_2 > a_4 > a_3 > a_1$\\
    & & & & $i_b$:$a_1 > a_4 > a_3$\\
    & & & & $i_c$:$a_3 > a_2$\\
    $a_4 =>$ c & $a_3$ rejected & ${a_3}$ & $a_1$:b, $a_2$:a, $a_4$:c & $i_a$:$a_2 > a_4 > a_3 > a_1$\\
    & & & & $i_b$:$a_1 > a_4 > a_3$\\
    & & & & $i_c$:$a_3 > a_2$\\
    $a_3 =>$ d & tentatively matched & ${}$ & $a_1$:b, $a_2$:a, $a_3$:d, $a_4$:c & none\\
\hline
\end{tabular}
\subsection{Temporary Memory, Last-Proposal preferred, Queue}
\label{appss:TMLPQ}

\begin{tabular}[!ht]{|c|c|c|c|c|}
\hline
	Proposal & Outcome & Order of Remaining Agents & Current Partial Matching & Item Preferences\\
\hline
    $a_1 =>$ a & tentatively matched & ${a_2, a_3, a_4}$ & $a_1$:a & none\\
    $a_2 =>$ a & $a_1$ rejected & ${a_3, a_4, a_1}$ & $a_2$:a & $i_a$:$a_2 > a_1$\\
    $a_3 =>$ a & $a_2$ rejected & ${a_4, a_1, a_2}$ & $a_3$:a & $i_a$:$a_3 > a_2 > a_1$\\
    $a_4 =>$ b & tentatively matched & ${a_1, a_2}$ & $a_3$:a, $a_4$:b & none\\
    $a_1 =>$ a & $a_3$ rejected & ${a_2, a_3}$ & $a_1$:a, $a_4$:b & $i_a$:$a_1 > a_3$\\
    $a_2 =>$ a & $a_1$ rejected & ${a_3, a_1}$ & $a_2$:a, $a_4$:b & $i_a$:$a_2 > a_1 > a_3$\\
    $a_3 =>$ a & $a_3$ rejected & ${a_1, a_3}$ & $a_2$:a, $a_4$:b & $i_a$:$a_2 > a_1 > a_3$\\
    $a_1 =>$ b & $a_4$ rejected & ${a_3, a_4}$ & $a_1$:b, $a_2$:a & $i_a$:$a_2 > a_1 > a_3$\\
    & & & & $i_b$:$a_1 > a_4$\\
    $a_3 =>$ b & $a_1$ rejected & ${a_4, a_1}$ & $a_2$:a, $a_3$:b & $i_a$:$a_2 > a_1 > a_3$\\
    & & & & $i_b$:$a_3 > a_1 > a_4$\\
    $a_4 =>$ b & $a_4$ rejected & ${a_1, a_4}$ & $a_2$:a, $a_3$:b & $i_a$:$a_2 > a_1 > a_3$\\
    & & & & $i_b$:$a_3 > a_1 > a_4$\\
    $a_1 =>$ c & tentatively matched & ${a_4}$ & $a_1$:c, $a_3$:b, $a_4$:a & none\\
    $a_4 =>$ b & $a_3$ rejected & ${a_3}$ & $a_1$:c, $a_2$:a, $a_4$:b & $i_b$:$a_4 > a_3$\\
    $a_3 =>$ a & $a_2$ rejected & ${a_2}$ & $a_1$:c, $a_3$:a, $a_4$:b & $i_a$:$a_3 > a_2$\\
    & & & & $i_b$:$a_4 > a_3$\\
    $a_2 =>$ a & $a_2$ rejected & ${a_2}$ & $a_1$:c, $a_3$:a, $a_4$:b & $i_a$:$a_3 > a_2$\\
    & & & & $i_b$:$a_4 > a_3$\\
    $a_2 =>$ b & $a_4$ rejected & ${a_4}$ & $a_1$:c, $a_2$:b, $a_3$:a & $i_a$:$a_3 > a_2$\\
    & & & & $i_b$:$a_2 > a_4 > a_3$\\
    $a_4 =>$ a & $a_3$ rejected & ${a_4}$ & $a_1$:c, $a_2$:b, $a_4$:a & $i_a$:$a_4 > a_3 > a_2$\\
    & & & & $i_b$:$a_2 > a_4 > a_3$\\
    $a_3 =>$ b & $a_3$ rejected & ${a_3}$ & $a_1$:c, $a_2$:b, $a_4$:a & $i_a$:$a_4 > a_3 > a_2$\\
    & & & & $i_b$:$a_2 > a_4 > a_3$\\
    $a_3 =>$ c & $a_1$ rejected & ${a_1}$ & $a_2$:b, $a_3$:c, $a_4$:a & $i_a$:$a_4 > a_3 > a_2$\\
    & & & & $i_b$:$a_2 > a_4 > a_3$\\
    & & & & $i_c$:$a_3 > a_1$\\
    $a_1 =>$ a & $a_4$ rejected & ${a_4}$ & $a_1$:a, $a_2$:b, $a_3$:c & $i_a$:$a_1 > a_4 > a_3 > a_2$\\
    & & & & $i_b$:$a_2 > a_4 > a_3$\\
    & & & & $i_c$:$a_3 > a_1$\\
    $a_4 =>$ c & $a_3$ rejected & ${a_3}$ & $a_1$:a, $a_2$:b, $a_4$:c & $i_a$:$a_1 > a_4 > a_3 > a_2$\\
    & & & & $i_b$:$a_2 > a_4 > a_3$\\
    & & & & $i_c$:$a_4 > a_3 > a_1$\\
    $a_3 =>$ d & tentatively matched & ${}$ & $a_1$:a, $a_2$:b, $a_3$:d, $a_4$:c & none\\
\hline
\end{tabular}

\subsection{}
\label{app:alldiff}

\begin{prop} The randomized versions of each of the 8 algorithms introduced in this article are all inequivalent.
\end{prop}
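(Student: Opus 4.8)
The plan is to reduce the statement to a finite computation. By definition, two randomized algorithms are equivalent exactly when, on \emph{every} input profile, the doubly stochastic matrix obtained by averaging the deterministic output over the uniform distribution on initial agent orders is the same. So it suffices, for each of the $\binom{8}{2}=28$ pairs, to exhibit one profile on which the two induced random assignment matrices disagree; in fact I expect two carefully chosen profiles to separate all pairs at once.

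The first profile I would use is the standard profile with $n=4$. Since agents $1,2,3$ all have preference $a>b>c>d$ and only agent $4$ differs, anonymity collapses the $24$ initial orders into four cases, indexed by the slot occupied by agent $4$; within each case the output does not depend on how the three interchangeable agents are arranged. Hence for each of the eight algorithms the random assignment matrix on the standard profile is an average of only four deterministic runs --- one of which, the ascending order, already appears in Appendix~\ref{apps:examples} --- and it has just two distinct rows (one for agent $4$, one common to $1,2,3$). Computing these eight matrices, I expect to obtain seven distinct ones, the sole coincidence being RTFQ $=$ RTLQ, consistent with the fact recorded in Table~\ref{t:stdprof} that these are the only two algorithms with the same deterministic output on this input. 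That disposes of $27$ of the $28$ pairs.

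For the remaining pair, RTFQ versus RTLQ, I would add one further $n=4$ profile and display the two (distinct) averaged matrices. Any profile on which the deterministic versions TFQ and TLQ differ is a natural candidate; one then averages its outputs over the $24$ orders and checks that the averages are not equal. A short search over small profiles --- e.g.\ among those with two agents topped by $a$ and two topped by $b$ --- produces such a witness, completing the argument.

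The obstacle here is bookkeeping rather than insight: there is no structural shortcut, and one must genuinely carry out the averaging and compare matrices. The one real subtlety worth stressing in the write-up is that inequivalence of the \emph{deterministic} algorithms (which Table~\ref{t:stdprof} almost exhibits outright) does not by itself yield inequivalence of the \emph{randomized} versions, because distinct deterministic outputs can average to the same stochastic matrix; this is precisely why the argument is phrased in terms of averaged matrices, and why the TFQ/TLQ pair, which coincides on the standard profile, requires its own witnessing profile.
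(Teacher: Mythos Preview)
Your plan is exactly the paper's approach: choose a small profile, compute the eight averaged random-assignment matrices, and read off that they are distinct. The only difference is the witness: the paper uses a single $n=4$ profile in which agents $1,2,3$ have $a>b>c>d$ and agent $4$ has $a>c>d>b$ (not the standard profile), and simply tabulates the eight matrices.

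One point worth flagging. On the paper's chosen profile the displayed table in fact shows two coincidences --- RPLS and RTFS give the same matrix, as do RTLS and RTLQ --- so the paper's single witness does not by itself separate all $28$ pairs either. Your instinct that one profile may leave residual coincidences, requiring a second witness, is therefore well founded and applies to the paper's choice as much as to yours.

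The gap in your own write-up is that you never carry out the averaging: the assertion that the standard profile yields exactly seven distinct matrices with the sole coincidence RTFQ $=$ RTLQ is plausible but unverified (and, as you correctly note, cannot be inferred from Table~\ref{t:stdprof} alone, since distinct deterministic outputs can average to the same matrix). The proof is not complete until the matrices --- and a separating profile for any residual pair --- are actually exhibited.
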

\begin{proof}
For a profile where agents 1, 2 and 3 have preferences $a > b > c > d$, and agent 4 has preferences $a > c > d > b$, the algorithms make the following allocations.

\begin{tabular}[!ht]{|c|c|c|c|c|}
\hline
	Alias & Memory & Item Preference & Data Structure & Probabilistic Allocation\\
\hline
	Random Serial & Permanent  & Accept-First & Stack & $1/4$, $1/3$, $1/6$, $1/4$\\
    Dictatorship&&&& $1/4$, $1/3$, $1/6$, $1/4$\\
    &&&& $1/4$, $1/3$, $1/6$, $1/4$\\
    &&&& $1/4$, \space$0$\space, $1/2$, $1/4$\\
    Naive Boston & Permanent& Accept-First & Queue & $1/4$, $1/3$, $1/12$, $1/3$\\
    &&&& $1/4$, $1/3$, $1/12$, $1/3$\\
	&&&& $1/4$, $1/3$, $1/12$, $1/3$\\
    &&&& $1/4$, \space$0$\space, $3/4$, \space$0$\space\\
	&Memory & Accept-Last & Stack & $1/4$, $1/3$, $1/4$, $1/6$\\
    &&&& $1/4$, $1/3$, $1/4$, $1/6$\\
    &&&& $1/4$, $1/3$, $1/4$, $1/6$\\
    &&&& $1/4$, \space$0$\space, $1/4$, $1/2$\\
    &Memory & Accept-Last & Queue & $1/4$, $1/3$, $1/3$, $1/12$\\
    &&&& $1/4$, $1/3$, $1/3$, $1/12$\\
    &&&& $1/4$, $1/3$, $1/3$, $1/12$\\
    &&&& $1/4$, \space$0$\space, \space$0$\space, $3/4$\\
	Iterative Dictatorship&Temporary &Accept-First & Stack & $1/4$, $1/3$, $1/4$, $1/6$\\
    &&&& $1/4$, $1/3$, $1/4$, $1/6$\\
    &&&& $1/4$, $1/3$, $1/4$, $1/6$\\
    &&&& $1/4$, \space$0$\space, $1/4$, $1/2$\\
    &Temporary & Accept-First & Queue & $1/3$, $1/3$, $1/4$, $1/12$\\
    &&&& $1/3$, $1/3$, $1/4$, $1/12$\\
    &&&& $1/3$, $1/3$, $1/4$, $1/12$\\
    &&&& \space$0$\space, \space$0$\space, $1/4$, $3/4$\\
	Yankee Swap & Temporary  & Accept-Last & Stack & $1/12$, $1/3$, $1/3$, $1/4$\\
    &&&& $1/12$, $1/3$, $1/3$, $1/4$\\
    &&&& $1/12$, $1/3$, $1/3$, $1/4$\\
    &&&& $3/4$, \space$0$\space, \space$0$\space, $1/4$\\
    &Temporary  & Accept-Last & Queue & $1/12$, $1/3$, $1/3$, $1/4$\\
    &&&& $1/12$, $1/3$, $1/3$, $1/4$\\
    &&&& $1/12$, $1/3$, $1/3$, $1/4$\\
    &&&& $3/4$, \space$0$\space, \space$0$\space, $1/4$\\
\hline
\end{tabular}
\end{proof}

\end{document}


\jl {The restriction approach is not promising. On n=m=30, mallows 0.8f (slightly correlated) and sample size = 10k, when restricting both number of proposals and number of times the item changes hands, the output (for both egal and util) is worse after GTTC. There is a small gain for some algorithms before GTTC.}

\subsection{Adaptive immediate acceptance}
\label{ss:adapt_immed}

In an immediate acceptance algorithm, an agent not assigned in round 1 may approach his second choice in round 2, but that item may already have been allocated permanently in round 1. Such an agent may then miss out on his third choice in round 3, whereas he may have been better off approaching the third choice in round 2. The modification of immediate acceptance algorithms in which each agent approaches his highest ranked unattached item instead of his highest rank unapproached item we call an \emph{adaptive} immediate acceptance.
In the sequential framework, we simply \mw{can we actually make this work?}
\jl{yes, but very convoluted construct, where we put a round marker on the queue, and passes the current matching to the agents, so the agent can skip the proposals that are "known" to fail. In the no memory case, the agents are also passed the information that the round has been reset, so that knowledge is reset at the same time}
 
 \begin{eg} (Adaptive Boston algorithm)
This is the adaptive immediate acceptance algorithm in which each item's preference order over agents in each group is the same as the 
priority order 
on agents. Consider the case $n=4$ and a profile for which agents 1 and 2 have preferences $a>b>c>d$ while agents $3$ and $4$ have preferences $a>c>b>d$ and $b>a>c>d$ respectively. The Adaptive Boston algorithm as interpreted using simultaneous offers 
then makes the assignments: $1:a, 4:b; 2:c; 3:d$. When interpreted sequentially, the assignment sequence is $1:a; 2:b; 3:c; 4:b, 2:c, 3:d$.  
\mw{Why does $c$ reject 3 for 2? - JL 17/01/17: in adaptive boston, both agent 2 and 3 proposes to c in the second round, with agent 2 having priority over agent 3, $c$ rejects agent 3}
\mw{maximum possible number of attachments? - JL 17/01/17: unclear what is the question. as at least 1 new agent get attached each round, the maximum number of proposals are $O(n^2)$, or $n(n+1)/2$ in the case of identical preferences}
\end{eg}

\mw {
Early-proposal preference and Stack algorithm can be imagined as follows. Each agents are introduced into the party one at a time. Upon introduced, that agent chooses and holds onto their most preferred item. If any agent loses their item, they chooses and holds onto an item that has not yet changed hands since the last introduction. It is repeated until an agent either chooses an item from the unassigned pile, or a $i_null$ from the unassigned pile. The next agent is then introduced and the process repeated until all agents introduced.
This approach can be adapted as an online algorithm, where any number of agents can be introduced without biasing towards the earliest agent when using SD (although a slight bias against the early agent exist when the number of agent greatly exceed to number of items, as an agent that receives $i_null$ will not get another turn). Furthermore, with slight modification, agents may add a new item to the unassigned pile before choosing, and each existing agent adds the new item to their preferences. This particular adaptation is outside the scope of this article, and the focus will remain on the simple case where the number of agents and the set of items are fixed.}

\jl {JL 09/12/16 found contradicting empirical result - TM AL Q was better than TM AL S in a 1000x single order, mallows. on repeat, TMALQ is better before GTTC, worse after. within 1\% of HA. The effect disappeared on repeat. It is unclear if it is just a} sampling error.

For now we give one example.

\begin{prop} (SD is a special case of GS)
Suppose that all items have the same preference order over agents, which without loss of generality we write $1>2>\dots>n$. We claim that Gale-Shapley will output an assignment that is the same as the output of Serial Dictatorship with the agent order $1, 2, \dots, n$.

The proof is inductive. The base case is that agent $1$ will get his first choice with GS. It is trivially true as agent $1$ will propose to its most preferred item, and since every item prefers agent $1$ to any other agent, they cannot be rejected later. Therefore agent $1$ will be allocated the same item under GS or SD.

GS has the property of stability, which implies that for every item that agent $i$ wants more than the item they are allocated, the item must be held by an agent ranked higher. If every agent before $i$ gets their choice as per SD, agent $i$ will propose to his choice under SD. As every agent after $i$ is ranked below $i$, they cannot cause that item to reject $i$. Therefore agent $i$ will have the same item under GS or SD.
\end{prop}

\mw{remove this part unless we have an elegant result "NB is a special case of GS" as above}

We now discuss an \emph{immediate acceptance} algorithm. In this class of algorithms, each agent first proposes to its most preferred item. Each item so proposed to chooses its most preferred agent from among the proposers. Assignments are final. The procedure continues as above until all agents and items are matched. 

\jl{
This description does not correspond to the framework above. We give an alternative, equivalent, description that does fit the sequential framework. All items are initially available to all agents. When an item has rejected an agent, it is no longer available to that agent. The key point is that the items' preferences over agents depend on the agents' preferences over items. All agents ranking item $j$ first are preferred by $j$ to all agents ranking $j$ second, etc, giving us $n$ groups of agents. Within each group, $j$ may have arbitrary preferences. At the start of each round, we need not do anything special.}

\begin{eg} ([Naive] Boston algorithm) 

\jl{This is the immediate acceptance algorithm where each proposers propose to their $i^th$ choice in round $i$. In the case where a proposee receives multiple proposal, the tie is broken using their preference.}

This is the immediate acceptance algorithm in which each item's preference order over agents in each group is the same as the priority order on agents. Consider the case $n=4$ and a profile for which agents 1 and 2 have preferences $a>b>c>d$ while agents $3$ and $4$ have preferences $a>c>b>d$ and $b>a>c>d$ respectively, the item's preferences are $1>2>3>4$. The first round of the Boston algorithm, as interpreted using simultaneous offers, matches agents $1$ and $4$ to items $a$ and $b$ respectively. In the second round, agents $2$ and $3$ proposes to items $b$ and $c$, which only the latter is matched. Despite item $b$ prefers agent $2$ to agent $4$, it does not change the matching made in the first round. When interpreted sequentially, the proposal sequence is $1:\mathbf{a}; 2:a; 3:a; 4:\mathbf{b}, 2:b; 3:\mathbf{c}, 2:\mathbf{d}$. The final matching in both case is $1:a, 2:d, 3:c, 4:b$.

\jl{The Boston algorithm as interpreted using simultaneous offers then makes the assignments: $1:a, 4:b; 3:c; 2:d$. When interpreted sequentially, the assignment sequence is $1:a; 2:b; 3:c; 4:b, 2:d$.  The final matching is $1:a, 2:d, 3:c, 4:b$.}
\end{eg}

In the general case of a two-sided matching algorithm, the proposees do not need to have the same preferences. Moreover, the two-sided Boston algorithm can be reduced to the Gale-Shapley algorithm as followed.

\begin{prop} (Boston algorithm reducible to Gale-Shapley)
A matching algorithm is a function with input as an ordered set of two sets of preferences and an initial order, and output a deterministic matching. Boston(A,I) has the same output as GS(A,I'), where item' can be constructed using only the input of Boston(). The preferences for each item is a well-ordered set, I', defined as follows, an item prefers agent $i$ to agent $j$ if and only if agent $i$ ranks that item higher than agent $j$ ranks that item, or that item prefers agent $i$ to agent $j$ in I when the ranks of that item in A are the same for agent $i$ and agent $j$.
\jl{I think the proof is simply that after every round (each agent proposes once), the two algorithm have the same matching}
***need to prove that claim***
\end{prop}

The Boston Mechanism, as applied to the school choice problem in Boston, involves priority group and an announced lottery as tiebreakers within priority groups. \cite{AbSo2003} As we are applying the Boston mechanism to a generic one-sided matching problem, where there are no ground for assigning different priorities to agents, we assign the same priority to the agents. The tiebreakers will be a random lottery, and for simplicity, we use the same preference order of agents for all the items.

Yankee Swap as a party game involves the participants, in turn, either unwrapping a present (then the next person gets a turn) or steal another participant's present (in which case that participant takes the next turn). In the party game, the participants only have partial knowledge of the unwrapped presents, and has a free choice when it is their turn. The algorithmic version of Yankee Swap gives the agents full knowledge of the items, and their action is restricted by the preference list submitted.

\mw{move to Section 4}
\begin{eg} (PLS and PLQ are not ex-post efficient)
Neither Accept-Last algorithm is ex-post efficient. To see this, consider the case where agents $1,2,3$ have respective preferences $a > b > c, b > a > c, a > c > b$. The final matching under each algorithm is $1:b, 2:a, 3:c$, but agents 1 and 2 can profitably swap items.
\end{eg}

\subsection{Rank Efficiency}
\label{sss:rankeff}
\mw{do we need this? can we show that Ys+TTC, etc, are not even O-eff?}

Featherstone \cite{Feat??} defined the rank distribution N of a random assignment S as the expected number of agents getting their $k^th$ choice or better.
\begin{equation}
N^S(k) \equiv \sum_{a}\sum_{o} 1_{r_{ao} \geq k} \times S_{ao}
\end{equation}
An assignment S rank dominates another assignment S' if $N^S$ stochastically dominates $N^{S'}$. An assignment is rank efficient if there are no feasible assignments that rank dominates it. Rank efficiency implies ordinal efficiency and ex-post efficiency. \cite{Feat??}

\begin{eg}
Yankee Swap with Gale Top Trading Cycle is not rank efficient. With preference order [123,132,231], the random assignment is a lottery between 1:1 2:3 3:2 with 2/3 probability and 1:2 2:1 3:3 with 1/3 probability. $N^S(1) = 5/3, N^S(2) = 3$ compared to 1:1 2:3 3:2 with $N(1) = 2, N(2) = 3$.
\end{eg}

\jl{JL 23/11/16 With a=o=4, preferences for agent 1 to 3 = 1234, preference for agent 4 = 1324, N(RSD) = {1, 2.5, 3, 4}, N(YSS+GTTC) = {1, 2.25, 3, 4}. It is even worse than RSD. Agent 4 is very likely to receive item 1 [0.75 probability] under YSS. The rank efficient allocation is assign item 3 to agent 4 with 100\% probability and any assignment for the other 3 items N(RE) = {1, 3, 3, 4}. }